\documentclass[
  aps,
  pra,
  reprint,
  groupedaddress,
  amsmath,
  amssymb,
  floatfix
]{revtex4-2}

\usepackage[utf8]{inputenc}
\usepackage{graphicx}
\usepackage{dcolumn}
\usepackage{bm}
\usepackage{url}
\usepackage{amsthm}

\graphicspath{{Figures/}}

\newtheorem{theorem}{Theorem}
\newtheorem{proposition}{Proposition}
\newtheorem{corollary}{Corollary}
\newtheorem{lemma}{Lemma}
\theoremstyle{definition}
\newtheorem{example}{Example}
\theoremstyle{plain}

\newcommand{\Fone}{F_1}
\newcommand{\Tr}{\operatorname{Tr}}
\newcommand{\RMS}{\operatorname{RMS}}
\newcommand{\BMN}{B_{M,N}}
\newcommand{\GammaOne}{\Gamma}

\newcommand{\Wick}{W}
\newcommand{\abs}[1]{\left\lvert #1 \right\rvert}
\newcommand{\norm}[1]{\left\lVert #1 \right\rVert}

\begin{document}

\title{Maximally antiflat fixed-particle-number fermionic states}

\author{Rahul Gupta}
\author{Arya Haghighian}
\author{James D. Whitfield}
\email{James.D.Whitfield@dartmouth.edu}
\affiliation{%
  Department of Physics and Astronomy,
  Dartmouth College,
  Hanover, New Hampshire 03755, USA
}

\date{\today}

\begin{abstract}
Fermionic Gaussian states are completely characterized by their two-point
correlations, making the covariance matrix a natural reference for
quantifying non-Gaussian structure. Motivated by fermionic antiflatness, we
study its extremal form for pure number-conserving states and ask which
states can be maximally antiflat and how they can be constructed. We show
that, at fixed particle number, maximal first antiflatness is equivalent to
a completely flat one-body reduced density matrix, and we obtain a complete
existence classification together with explicit block and cyclic-orbit
constructions of all allowed sectors. We then investigate the structure that appears beyond 
the covariance level. Maximally antiflat states with
identical one-body data can exhibit distinct connected four-point
correlations, which we characterize through the four-point Majorana Wick
error and, equivalently, the connected two-body reduced density matrix
cumulant. We show that this quartic structure is necessarily nonzero and
formulate its extremization as a two-body reduced density matrix (2RDM) purity
problem. Combining this framework with a Hilbert--Schmidt estimate of Christiansen for truncated two-body correlations, we obtain an upper bound on the connected quartic structure in terms of first antiflatness. Our results establish both a complete characterization of
maximal fermionic antiflatness at fixed particle number and a constrained
hierarchy linking covariance-level flatness to higher-order non-Gaussian
correlations.
\end{abstract}

\maketitle

\section{Introduction}

Fermionic Gaussian states provide one of the fundamental classically
tractable sectors of quantum many-body physics. They arise as ground states
and dynamical states of quadratic fermionic Hamiltonians and, equivalently,
as states generated by fermionic Gaussian or matchgate circuits. Their
special role follows from Wick's theorem: all higher-order correlation
functions are determined by two-point correlations, so a covariance matrix
of polynomial size contains the correlation data required to describe the
state efficiently \cite{SuraceTagliacozzo2022}. Interactions and other
non-Gaussian operations break this closure and generate correlations that
cannot be reconstructed from the covariance matrix alone. Quantifying this
departure from the Gaussian manifold, often called fermionic
non-Gaussianity or fermionic magic, has therefore become a useful way of
characterizing both many-body complexity and resources beyond classically
simulable free-fermion dynamics
\cite{CoffmanSmithGao2025,SierantStornatiTurkeshi2026,HaugTurkeshiSierant2026}.

Entropy provides a natural way to characterize spectral flatness, since
R\'enyi entropies coincide for a uniform distribution and separate as the
spectrum becomes increasingly nonuniform. This idea underlies the stabilizer
R\'enyi entropy, where the relevant distribution is constructed from Pauli
expectation values \cite{LeoneOlivieroHamma2022}, and has also motivated
antiflatness measures of entanglement spectra and their relation to
nonstabilizerness \cite{TirritoEtAl2024}. In the fermionic setting, Sierant, Stornati, and
Turkeshi introduced \emph{fermionic antiflatness} (FAF), a family of
Gaussian-invariant quantities constructed from moments of the Majorana
covariance matrix \cite{SierantStornatiTurkeshi2026}. For a system of
\(M\) fermionic modes, the first antiflatness is defined as
\begin{equation}
F_1
=
M-\frac{1}{2}\operatorname{Tr}(C^{T}C),
\label{eq:intro_f1}
\end{equation}
where \(C\) is the corresponding \(2M\times2M\) Majorana covariance
matrix. Thus, \(F_1\) measures the reduction of the covariance-matrix norm
from its pure-Gaussian value. The covariance matrix also defines a natural
Gaussian reference: the Gaussian state with the same covariance matrix is
the closest Gaussian state in relative entropy \cite{CoffmanSmithGao2025}.
This gives an independent information-theoretic motivation for using
covariance-based quantities to characterize departures from Gaussianity.
FAF vanishes on pure fermionic Gaussian states and provides an efficiently
accessible covariance-level diagnostic of non-Gaussianity. It has since
been applied to many-body phases and dynamics and developed further as an
experimentally accessible witness of fermionic non-Gaussianity
\cite{SierantStornatiTurkeshi2026,HaugTurkeshiSierant2026}.

The present work asks an extremal question about this covariance-level
diagnostic. We restrict to pure states of exactly \(N\) fermions distributed
among \(M\) single-particle modes, where \(N\) denotes the conserved
particle number, and ask: \emph{how large can the first fermionic
antiflatness become once particle number is fixed, and which pure states
attain this value?} The fixed-number setting is both physically common---in
particular in electronic-structure and number-conserving lattice
problems---and mathematically more constrained than the unrestricted
fermionic Hilbert space.
Particle-number conservation eliminates the anomalous quadratic correlators
$\langle a_i a_j\rangle$ and
$\langle a_i^\dagger a_j^\dagger\rangle$, so the remaining quadratic
correlation data are completely determined by the number-conserving
one-body reduced density matrix (1RDM),
$\Gamma_{ij}=\langle a_i^\dagger a_j\rangle$. Its eigenvalues, the natural
occupation numbers, are subject not only to the Pauli bounds but also, for
pure states, to the stronger constraints of the fermionic
\(N\)-representability problem
\cite{Coleman1963,BorlandDennis1972,Klyachko2006,Schilling18}. The
maximal-antiflatness problem therefore becomes a concrete question in
pure-state 1RDM representability.

In the number-conserving sector, the first antiflatness reduces to a simple 1RDM invariant,
\begin{equation}
    F_1=4\left[N-\operatorname{Tr}(\Gamma^2)\right]
=4\operatorname{Tr}\!\left[\Gamma(1-\Gamma)\right],
\end{equation}
a form also noted recently in Ref.~\cite{BonillaVargas2026}. This relation shows that, at fixed particle number, maximizing the first antiflatness is equivalent to making the 1RDM completely uniform, \(\Gamma=(N/M)I_M\). 
The saturation condition therefore places the antiflatness problem within a
broader class of questions concerning pure fermionic states with maximally
mixed low-order reduced density matrices and one-body entanglement
\cite{GigenaRossignoli2015}. Such states have recently been
studied from the perspective of many-body fermionic entanglement and
combinatorial designs \cite{GiorgadzeEtAl2025}. In particular, a maximally
mixed 1RDM is precisely the one-body instance of this broader reduced density matrix problem. Here we connect that condition specifically to maximal
fermionic antiflatness and solve its pure-state realizability problem
analytically for arbitrary \((M,N)\), obtaining a complete existence
classification together with explicit saturating constructions.
Using particle--hole symmetry together with explicit constructions and
obstruction arguments, we identify precisely which particle-number sectors
admit maximally antiflat pure states. We also provide two complementary
constructions. A block construction produces simple equal-weight
superpositions whenever the modes can be partitioned appropriately, while
a cyclic-orbit construction supplies saturating states in the remaining
generic sectors. The exceptional one- and two-particle sectors are
controlled by rank and fermionic pairing constraints, respectively
\cite{SchliemannEtAl2001}.

Maximizing the first antiflatness, however, does not exhaust the
non-Gaussian structure of a state. Once the 1RDM is fixed to its maximally
flat value, many inequivalent pure states remain possible. They are
indistinguishable to all one-body observables and have the same first
antiflatness, yet they can possess different connected two-body
correlations. This raises the second question addressed here:
\emph{what information remains beyond covariance once first antiflatness
has been maximized?} To probe this layer we use the violation of
four-point Wick factorization, an established diagnostic of fermionic
non-Gaussianity \cite{CoffmanSmithGao2025}. In a fixed-particle-number
sector we show that the connected four-Majorana Wick-error tensor and the
connected two-body reduced density matrix cumulant \(\Lambda\) contain the
same quartic information, expressed in different fermionic bases. Writing
\begin{equation}
Q_4=\|\Lambda\|_F,
\end{equation}
we use a standard cumulant identity to show that
\begin{equation}
\operatorname{Tr}\Lambda=-\frac{F_1}{8},
\end{equation}
so this connected quartic structure cannot vanish for any non-Slater
fixed-particle-number state, and in particular for any nontrivial
maximally antiflat state. We further obtain a lower bound on its magnitude.

In addition, the quartic structure is bounded from above by the one-body antiflatness. Translating the Hilbert--Schmidt estimate for the truncated two-body operator proved by Christiansen \cite{Christiansen2024} into the antisymmetric-pair convention used here gives

\begin{equation}
Q_4^2
\leq
\frac{5NF_1}{16}.
\label{eq:intro-q4-upper}
\end{equation}
Thus the covariance-level quantity \(F_1\) does more than diagnose
one-body non-Gaussianity: it constrains the magnitude of the
connected correlations at the next level of the hierarchy. On the
maximally antiflat manifold, where
\(F_1=4N(M-N)/M\), this becomes
\begin{equation}
Q_4^2
\leq
\frac{5N^2(M-N)}{4M}.
\label{eq:intro-q4-upper-flat}
\end{equation}
The bound need not be sharp under the additional flat-1RDM constraint, but
it provides a general analytic upper envelope for the connected quartic
structure of maximally antiflat states.

The resulting picture is therefore hierarchical. First antiflatness
characterizes what is visible at the covariance, or one-body, level; after
that information has been made maximally featureless, connected four-point
correlations form a distinct and necessarily nonvanishing next layer of
non-Gaussian structure. At the same time, this next layer is not
unconstrained: its magnitude is quantitatively linked back to the
one-body antiflatness through Eq.~\eqref{eq:intro-q4-upper}. This
separation and connection are relevant both conceptually and practically:
they provide exact benchmarks for covariance-based non-Gaussianity
measures, supply explicit families of extremal number-conserving states,
and clarify both what first antiflatness fails to determine and what it can
nevertheless constrain about higher-order fermionic correlations.

The remainder of the paper is organized as follows. Section~II fixes the
fermionic and Majorana conventions, establishes the reduction of covariance
data to the 1RDM in a fixed-particle-number sector, and derives the sharp
first-antiflatness bound together with its saturation condition.
Section~III solves the corresponding pure-state existence problem,
including the block and cyclic-orbit constructions and the exceptional
obstructions. Section~IV then moves beyond covariance, relates the
four-point Majorana Wick error to the connected 2RDM cumulant, establishes
the unavoidable connected quartic structure of maximally antiflat states,
recasts its extremization as a 2RDM-purity problem, and derives an upper bound on its magnitude in terms of first antiflatness. Section~V
summarizes the resulting hierarchy and discusses the remaining
representability and higher-order extremal problems.

\section{Antiflatness at Fixed Particle Number}
\label{sec:fixed_number_antiflatness}

We now ask how large the first fermionic antiflatness can become for a state with a fixed number of particles. The key simplification is that particle-number conservation removes anomalous two-point correlators. As a result, the Majorana covariance matrix, and therefore the first antiflatness, can be written entirely in terms of the one-body reduced density matrix (1RDM). This allows us to derive a sharp upper bound and identify the condition for its saturation.

\subsection{Reduction to the one-body density matrix}

Consider \(M\) spinless fermionic modes with creation and annihilation operators
\(a_i^\dagger\) and \(a_i\), satisfying the canonical anticommutation relations
\cite{SuraceTagliacozzo2022},
\begin{equation}
\{a_i,a_j^\dagger\}=\delta_{ij},
\qquad
\{a_i,a_j\}
=
\{a_i^\dagger,a_j^\dagger\}
=
0.
\label{eq:car}
\end{equation}
The total particle-number operator is
\begin{equation}
\hat{N}
=
\sum_{i=1}^{M} a_i^\dagger a_i .
\label{eq:number_operator}
\end{equation}
Throughout this work, we restrict to pure states with a definite particle number,
\begin{equation}
\hat{N}|\psi\rangle
=
N|\psi\rangle .
\label{eq:fixed_number}
\end{equation}

For each fermionic mode, we introduce two Majorana operators,
\begin{equation}
\gamma_{2i-1}
=
a_i+a_i^\dagger,
\qquad
\gamma_{2i}
=
-i(a_i-a_i^\dagger),
\label{eq:majoranas}
\end{equation}
which satisfy
\begin{equation}
\{\gamma_\mu,\gamma_\nu\}
=
2\delta_{\mu\nu}.
\end{equation}
We use the real antisymmetric covariance matrix
\begin{equation}
C_{\mu\nu}
=
-\frac{i}{2}
\langle\psi|
[\gamma_\mu,\gamma_\nu]
|\psi\rangle .
\label{eq:covariance}
\end{equation}
The first fermionic antiflatness is defined as
\cite{SierantStornatiTurkeshi2026}
\begin{equation}
F_1
=
M-\frac{1}{2}\Tr(C^{T}C).
\label{eq:first_antiflatness}
\end{equation}
For a pure fermionic Gaussian state, the covariance matrix satisfies
\cite{SuraceTagliacozzo2022}
\begin{equation}
C^2=-I_{2M}.
\end{equation}
Since \(C\) is antisymmetric, \(C^T=-C\), and therefore
\begin{equation}
C^T C=-C^2=I_{2M}.
\end{equation}
It follows that
\begin{equation}
F_1
=
M-\frac{1}{2}\Tr(I_{2M})
=
0.
\end{equation}
Thus, pure fermionic Gaussian states define the zero-antiflatness
endpoint of the measure.

The relevant one-body object in the number-conserving sector is the 1RDM,
\begin{equation}
\Gamma_{ij}
=
\langle\psi|
a_i^\dagger a_j
|\psi\rangle .
\label{eq:1rdm}
\end{equation}
The matrix \(\Gamma\) is Hermitian and satisfies
\cite{Lowdin1955,Coleman1963}
\begin{equation}
0\leq \Gamma\leq I_M,
\qquad
\Tr\Gamma=N.
\label{eq:1rdm_properties}
\end{equation}
Its eigenvalues \(n_\alpha\) are the natural occupation numbers, with
\(0\leq n_\alpha\leq 1\) and
\begin{equation}
\sum_{\alpha=1}^{M} n_\alpha=N.
\label{eq:natural_occupations}
\end{equation}

Particle-number conservation now gives the main simplification. Since the operators
\(a_i a_j\) and \(a_i^\dagger a_j^\dagger\) change the particle number by two, their expectation values vanish in a state with definite \(N\):
\begin{equation}
\langle a_i a_j\rangle
=
\langle a_i^\dagger a_j^\dagger\rangle
=
0.
\label{eq:anomalous_vanish}
\end{equation}
Thus, all quadratic expectation values are determined by \(\Gamma\).

Using Eqs.~\eqref{eq:majoranas}--\eqref{eq:anomalous_vanish}, the entries of the Majorana covariance matrix can be written directly in terms of the 1RDM. For each mode,
\begin{equation}
C_{2i-1,2i}
=
1-2\Gamma_{ii},
\label{eq:covariance_diagonal}
\end{equation}
while for \(i\neq j\),
\begin{align}
C_{2i-1,2j-1}
&=
2\,\operatorname{Im}\Gamma_{ij},
\\
C_{2i,2j}
&=
2\,\operatorname{Im}\Gamma_{ij},
\\
C_{2i-1,2j}
&=
-2\,\operatorname{Re}\Gamma_{ij},
\\
C_{2i,2j-1}
&=
2\,\operatorname{Re}\Gamma_{ij}.
\label{eq:covariance_offdiagonal}
\end{align}
The remaining entries follow from the antisymmetry of \(C\).

Equations~\eqref{eq:covariance_diagonal}--\eqref{eq:covariance_offdiagonal} show that, at fixed particle number, the Majorana covariance matrix contains no quadratic information beyond the 1RDM. In particular, the first antiflatness is completely determined by \(\Gamma\).

\subsection{Sharp bound and saturation}

We can now express \(F_1\) directly in terms of the 1RDM. From the component relations above,
\begin{equation}
\Tr(C^{T}C)
=
2\sum_{i=1}^{M}
(1-2\Gamma_{ii})^2
+
16\sum_{i<j}
|\Gamma_{ij}|^2.
\label{eq:covariance_norm_gamma}
\end{equation}
Since \(\Gamma\) is Hermitian,
\begin{equation}
\Tr(\Gamma^2)
=
\sum_{i=1}^{M}\Gamma_{ii}^2
+
2\sum_{i<j}|\Gamma_{ij}|^2.
\label{eq:gamma_purity}
\end{equation}
Using \(\Tr\Gamma=N\), Eq.~\eqref{eq:first_antiflatness} therefore becomes
\begin{equation}
F_1
=
4\left(
N-\Tr(\Gamma^2)
\right).
\label{eq:f1_gamma}
\end{equation}
Equivalently, in terms of the natural occupation numbers,
\begin{equation}
F_1
=
4\sum_{\alpha=1}^{M}
n_\alpha(1-n_\alpha).
\label{eq:f1_natural_occupations}
\end{equation}

Equation~\eqref{eq:f1_natural_occupations} expresses the first
antiflatness entirely in terms of the natural occupation spectrum.
We now determine its fixed-\(N\) upper bound.

\begin{theorem}
For any fixed-\(N\) fermionic state,
\begin{equation}
F_1 \leq B_{M,N},
\qquad
B_{M,N}\equiv\frac{4N(M-N)}{M}.
\label{eq:fixedN_bound}
\end{equation}
More precisely,
\begin{equation}
B_{M,N}-F_1
=
4\left\|
\Gamma-\frac{N}{M}I_M
\right\|_F^2.
\label{eq:gap_identity}
\end{equation}
Hence equality holds if and only if
\begin{equation}
\Gamma=\frac{N}{M}I_M.
\label{eq:flat_1rdm}
\end{equation}
\end{theorem}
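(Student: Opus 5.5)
The plan is to start from the 1RDM form of the first antiflatness, Eq.~\eqref{eq:f1_gamma}, $F_1=4(N-\Tr\Gamma^2)$, which the paper has already derived. The whole statement then follows from one exact identity, Eq.~\eqref{eq:gap_identity}. The inequality and the saturation condition are read off from it, so no separate optimization over natural occupations is needed.

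The first step is to expand the Frobenius norm of the traceless part of $\Gamma$. Since $\Gamma$ is Hermitian,
\begin{equation*}
\left\|\Gamma-\frac{N}{M}I_M\right\|_F^2
=\Tr\Gamma^2-\frac{2N}{M}\Tr\Gamma+\frac{N^2}{M^2}\Tr I_M .
\end{equation*}
Using $\Tr\Gamma=N$ from Eq.~\eqref{eq:1rdm_properties}, this equals $\Tr\Gamma^2-N^2/M$.

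The second step is to subtract $F_1$ from $B_{M,N}$ and simplify:
\begin{equation*}
B_{M,N}-F_1=\frac{4N(M-N)}{M}-4N+4\Tr\Gamma^2=4\left(\Tr\Gamma^2-\frac{N^2}{M}\right).
\end{equation*}
This is exactly $4\|\Gamma-(N/M)I_M\|_F^2$, which proves Eq.~\eqref{eq:gap_identity}.

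The third step draws the conclusions. The Frobenius norm is nonnegative, so $F_1\le B_{M,N}$. It vanishes if and only if its argument vanishes, so equality holds precisely when $\Gamma=(N/M)I_M$. The paper's earlier results guarantee that this flat matrix is itself admissible: it satisfies $0\le\Gamma\le I_M$ and has trace $N$.

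None of these steps is a real obstacle; the argument is routine algebra. The one point worth flagging is that the theorem asserts only an upper bound and an iff condition on $\Gamma$. It does not claim the bound is attained by some pure $N$-particle state. Whether the flat 1RDM is pure-state representable is a separate question, which the paper defers to Sec.~III. The proof should therefore avoid claiming attainment and state only the equality characterization.
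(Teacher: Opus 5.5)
Your proposal is correct and follows the paper's proof essentially line for line. Both start from $F_1=4(N-\Tr\Gamma^2)$ and use $\Tr\Gamma=N$ to reduce $\|\Gamma-(N/M)I_M\|_F^2$ to $\Tr\Gamma^2-N^2/M$. Both then read off the bound and the equality condition from the nonnegativity and definiteness of the Frobenius norm, and your remark that pure-state attainment is deferred to Sec.~III matches the paper.
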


We use the Frobenius norm
\begin{equation}
\|A\|_F
=
\sqrt{\operatorname{Tr}(A^\dagger A)},
\end{equation}
which coincides with the Hilbert--Schmidt norm for finite-dimensional operators.

\begin{proof}
Using Eq.~\eqref{eq:f1_gamma},
\begin{align}
B_{M,N}-F_1
&=
4\left(
\Tr(\Gamma^2)-\frac{N^2}{M}
\right).
\end{align}
Since \(\Tr\Gamma=N\),
\begin{align}
\left\|
\Gamma-\frac{N}{M}I_M
\right\|_F^2
&=
\Tr\left[
\left(
\Gamma-\frac{N}{M}I_M
\right)^2
\right]
\nonumber\\
&=
\Tr(\Gamma^2)-\frac{N^2}{M}.
\end{align}
This gives Eq.~\eqref{eq:gap_identity}. The Frobenius norm is
nonnegative, which proves the bound. Equality holds exactly when
\(\Gamma=(N/M)I_M\).
\end{proof}
The saturation condition has a simple interpretation. A maximally antiflat state must have
\begin{equation}
n_1=n_2=\cdots=n_M=\frac{N}{M}.
\label{eq:flat_natural_occupations}
\end{equation}
Because all eigenvalues of \(\Gamma\) are equal, the 1RDM is proportional to the identity. Thus, the condition is stronger than requiring equal occupations in one chosen mode basis. Every normalized one-particle orbital has the same average occupation \(N/M\).

\noindent\textit{Relation to relative-entropy non-Gaussianity:} The flat occupation spectrum also has an independent information-theoretic
interpretation. Coffman, Smith, and Gao showed that the Gaussian state
\(\rho_G\) with the same covariance matrix as a state \(\rho\) is the
closest Gaussian state in relative entropy \cite{CoffmanSmithGao2025}:
\begin{equation}
\min_{\sigma_G}D(\rho\Vert\sigma_G)
=
D(\rho\Vert\rho_G).
\label{eq:closest_gaussian}
\end{equation}
For the pure states considered here,
\begin{equation}
D(\rho\Vert\rho_G)=S(\rho_G).
\label{eq:pure_relative_entropy}
\end{equation}

At fixed particle number, the Gaussian reference state is determined by the
natural occupations \(n_\alpha\), and its entropy is
\begin{equation}
\begin{aligned}
S(\rho_G)
&=
\sum_{\alpha=1}^{M} h(n_\alpha),
\\
h(n)
&=
-n\log n-(1-n)\log(1-n).
\end{aligned}
\label{eq:gaussian_entropy}
\end{equation}
Since \(h(n)\) is concave and
\(\sum_{\alpha}n_\alpha=N\), Jensen's inequality gives
\begin{equation}
S(\rho_G)
\leq
M\,h\!\left(\frac{N}{M}\right),
\label{eq:gaussian_entropy_bound}
\end{equation}
with equality exactly when
\begin{equation}
n_\alpha=\frac{N}{M}
\qquad
\text{for all }\alpha.
\label{eq:relative_entropy_flat}
\end{equation}

Thus the same flat occupation spectrum that maximizes the first
antiflatness also maximizes the relative-entropy distance from the Gaussian
set, whenever such a spectrum is realizable by a pure fixed-\(N\) state.

The bound also shows the effect of fixing the filling. At half filling,
\(N=M/2\), the flat 1RDM gives \(C=0\) and
\begin{equation}
B_{M,M/2}=M.
\end{equation}
Away from half filling, \(C\) cannot vanish because the fixed density itself leaves a nonzero quadratic contribution. The maximally antiflat state is therefore the state whose 1RDM is as uniform as the particle-number constraint allows.

Equation~\eqref{eq:gap_identity} solves the optimization problem at the level of one-body data. It does not, however, guarantee that the flat matrix in Eq.~\eqref{eq:flat_1rdm} can arise from a \emph{pure} \(N\)-fermion state. Pure-state 1RDMs obey additional representability constraints beyond
\(0\leq\Gamma\leq I_M\) and \(\Tr\Gamma=N\)
\cite{Coleman1963,Klyachko2006,Schilling18}.
The remaining question is therefore an existence problem:
for which pairs \((M,N)\) does there exist a pure state satisfying
\(\Gamma=(N/M)I_M\)?
More generally, pure-state one-body spectra are constrained by the
fermionic quantum-marginal problem and the associated generalized Pauli
constraints and moment-polytope structure
\cite{WalterDoranGrossChristandl2013,AltunbulakKlyachko2008}.
The next section answers this question and gives explicit constructions whenever saturation is possible.

\section{Existence and Construction of Maximally Antiflat Pure States}
\label{sec:saturators}

Section~\ref{sec:fixed_number_antiflatness} showed that a pure fixed-\(N\)
state reaches the sharp antiflatness bound if and only if
\begin{equation}
    \GammaOne=\frac{N}{M}I.
\end{equation}
The remaining question is therefore one of pure-state representability:
for which pairs \((M,N)\) can this scalar matrix occur as the 1RDM of a
pure \(N\)-fermion state?

It is useful to organize the answer using particle--hole symmetry. If a
pure \(r\)-particle state has
\(\GammaOne=(r/M)I\), then its particle--hole dual in the
\((M-r)\)-particle sector has
\(\GammaOne=((M-r)/M)I\). We therefore define
\begin{equation}
    r=\min(N,M-N)
\end{equation}
and work in the smaller of the particle and hole sectors.

The complete existence result is the following.

\begin{theorem}[Existence classification]
\label{thm:existence-classification}
Let \(r=\min(N,M-N)\). There exists a pure fixed-\(N\) state with
\begin{equation}
    \GammaOne=\frac{N}{M}I
\end{equation}
if and only if one of the following holds:
\begin{enumerate}
    \item \(r=0\);
    \item \(r=2\) and \(M\) is even;
    \item \(r\geq3\).
\end{enumerate}
Equivalently, the only nontrivial impossible sectors are \(r=1\), and
\(r=2\) with odd \(M\).
\end{theorem}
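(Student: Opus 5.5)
We first reduce to the small sector. Particle--hole conjugation $a_i\mapsto a_i^\dagger$ maps a pure $N$-particle state with 1RDM $\GammaOne$ to a pure $(M-N)$-particle state with 1RDM $I-\GammaOne^{T}$. It therefore maps $\GammaOne=(N/M)I$ to $\GammaOne=((M-N)/M)I$, so it suffices to treat $N=r\le M/2$. We then split into the cases $r=0$, $r=1$, $r=2$ and $r\ge3$.

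\emph{The cases $r=0$ and $r=1$.} For $r=0$ the vacuum gives $\GammaOne=0$, which trivially has the required form. For $r=1$ every pure state has the form $\sum_i c_i a_i^\dagger|0\rangle$. Its 1RDM is $\GammaOne_{ij}=\bar c_i c_j$, which has rank one. This cannot equal $(1/M)I$ when $M\ge2$, and $r=1$ forces $M\ge2$.

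\emph{The case $r=2$.} Write the state as $|\psi\rangle=\sum_{i<j}A_{ij}a_i^\dagger a_j^\dagger|0\rangle$ with $A$ complex antisymmetric. A direct computation gives $\GammaOne\propto (A^\dagger A)^{T}$, with the constant fixed by normalization. By the Youla (Slater) normal form $A=U\bigl(\bigoplus_k z_k J\bigr)U^{T}$, where $J$ is the $2\times 2$ symplectic block and $U$ is unitary, the nonzero singular values of $A$ occur in pairs. Hence $\operatorname{rank}\GammaOne$ is even \cite{SchliemannEtAl2001}. For odd $M$ the 1RDM therefore has a kernel and cannot be proportional to $I$. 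For even $M$ we take the explicit pairing state
\begin{equation}
\sqrt{2/M}\sum_{k=1}^{M/2}a_{2k-1}^\dagger a_{2k}^\dagger|0\rangle .
\end{equation}
It has $\GammaOne=(2/M)I$: the diagonal entries follow directly, and the off-diagonal entries vanish because distinct pairs differ in two orbitals.

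\emph{The case $r\ge3$: selection rule.} Both constructions below use one observation. For Slater determinants $|S\rangle$ and $|S'\rangle$ on $r$-subsets of modes, $\langle S|a_i^\dagger a_j|S'\rangle\neq0$ with $i\ne j$ requires $|S\triangle S'|=2$. Consequently, a superposition of determinants in which no two occupation sets differ by exactly one orbital has a diagonal 1RDM, whose diagonal entries are the occupation probabilities.

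\emph{Block construction ($r\mid M$).} Partition the modes into $q=M/r$ disjoint blocks of size $r$ and take the equal-weight superposition of the $q$ block determinants. Distinct determinants differ in $2r\ge 2$ orbitals, so the selection rule makes $\GammaOne$ diagonal. Each mode is occupied with probability $1/q=r/M$, giving $\GammaOne=(r/M)I$.

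\emph{Cyclic-orbit construction (general $M\ge 2r$).} Take
\begin{equation}
|\psi\rangle=M^{-1/2}\sum_{t\in\mathbb Z_M}|S+t\rangle ,
\end{equation}
where $S\subset\mathbb Z_M$ is a non-periodic $r$-subset, so that its orbit has exactly $M$ elements. By translation covariance every mode has occupation $r/M$, and $\GammaOne$ is circulant. By the selection rule the off-diagonal entries vanish provided $|S\cap(S+d)|\le r-2$ for all $d\ne0$. We first try $S=\{0,1,\dots,r-2,r\}$, an interval with one hole. For shifts $d$ with $d$ and $-d$ not congruent to $\pm1$, the overlap is at most $r-2$ once $M\ge 2r$. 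The shift $d=1$ gives overlap $\{1,\dots,r-2\}$, of size $r-2$, since $r-1\notin S$. Wraparound must be checked separately. If some shift does reach overlap $r-1$, we will instead attach momentum phases $e^{2\pi i kt/M}$ to the orbit, or choose a Sidon-like $S$, so that the offending circulant entries cancel. Alternatively, one can combine disjoint blocks with a cyclic orbit on the remainder.

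\emph{Expected main obstacle.} The hard step is this combinatorial verification for $r\ge3$ uniformly in $M$. In particular, the small cases $M$ close to $2r$, where wraparound makes translates of $S$ nearly coincide, require a careful choice of $S$ or an explicit finite list of exceptions handled by hand.
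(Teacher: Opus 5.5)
\textbf{Verdict: genuine gap.} The case $r\ge3$ with $r\nmid M$ is not proved, and one claim you make on the way is false.

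Your particle--hole reduction, the $r=0,1$ cases, the rank-parity obstruction for $r=2$ with odd $M$, the pairing state for even $M$, the one-hop selection rule and the block construction all match the paper. One slip: the selection rule needs the block determinants to satisfy $|S\triangle S'|=2r>2$; the bound $2r\ge2$ you wrote would not suffice.

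The gap sits in the case $r\ge3$, $r\nmid M$, the only place where new existence must be shown. You choose the right support $S=\{0,1,\dots,r-2,r\}$, which is exactly the paper's $S_r$. But you defer the overlap verification (``wraparound must be checked separately'') and offer fallbacks that are not worked out.

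Your claim that every shift other than $\pm1$ gives overlap at most $r-2$ ``once $M\ge2r$'' is false at $M=2r$. Take $(M,r)=(6,3)$: $S=\{0,1,3\}$ and $S+3=\{0,3,4\}$ share two sites. So the orbit contains $|\{0,1,3\}\rangle$ and $|\{0,3,4\}\rangle$, which are related by the single hop $1\to4$. This is the only pair of orbit determinants contributing to $\Gamma_{14}$, so $|\Gamma_{14}|=1/6$ whatever phases you attach to the orbit. The ``momentum phase'' fallback therefore cannot repair it.

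The fix is to restrict the cyclic construction to $M\ge2r+1$ and hand $M=2r$ to your block construction, which already covers it since $r\mid 2r$. For $M\ge2r+1$ the argument runs as follows.
\begin{enumerate}
\item The cyclic gap $M-r\ge r+1>2$ is the unique longest gap, so the orbit has full length $M$.
\item Since $|S\cap(S+t)|=|S\cap(S-t)|$, it suffices to check $1\le t\le\lfloor M/2\rfloor$.
\item From $\lceil M/2\rceil\ge r+1$ you get $r+\lfloor M/2\rfloor\le M-1$, so none of these translates wraps around.
\item A direct count gives $|S\cap(S+t)|=r-2$ for $t=1$, $r-t$ for $2\le t\le r-2$, $1$ for $t\in\{r-1,r\}$, and $0$ for $r<t\le\lfloor M/2\rfloor$. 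All of these are at most $r-2$ because $r\ge3$.
\end{enumerate}
This is exactly the paper's proof of its explicit cyclic family. With it, your ``expected main obstacle'' disappears: the only problematic small case is $M=2r$, and the blocks handle it.
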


The proof is constructive. We first identify a simple condition on
occupation-basis determinants that guarantees a flat 1RDM, then give two
families of states that realize the allowed sectors. The remaining cases are
ruled out by pure-state constraints.

Closely related support conditions arise in constructions of fermionic states
with maximally mixed few-body reduced density matrices, where combinatorial
designs and restrictions on overlaps between constituent Slater determinants
provide sufficient conditions for low-order RDM flatness
\cite{GiorgadzeEtAl2025}. For the one-body problem considered here, the
relevant condition takes the particularly simple form below.

For a subset \(S\subset\{1,\ldots,M\}\), let \(\lvert S\rangle\) denote the
occupation-basis determinant with precisely the modes in \(S\) occupied.
An off-diagonal 1RDM element
\(\langle a_i^\dagger a_j\rangle\) can receive a contribution only from two
determinants whose occupied sets differ by replacing \(j\) with \(i\).
Thus, for two distinct \(N\)-element supports \(S\) and \(T\), one-body
coherence can occur only when
\begin{equation}
    \lvert S\cap T\rvert=N-1.
\label{eq:one-hop-condition}
\end{equation}
This suggests a simple construction principle: choose equal-weight
determinant supports so that every mode appears equally often, while no two
distinct supports are related by a single particle hop.

\subsection{Block saturators}
\label{subsec:block-saturators}

A particularly simple realization, closely related to the GHZ-like
disjoint-determinant constructions used for maximally mixed fermionic RDMs
\cite{GiorgadzeEtAl2025}, is an equal superposition over disjoint occupied
blocks.

\begin{theorem}[Block saturators]
\label{thm:block-saturators}
Let \(r=\min(N,M-N)\). Assume \(1<r<M\) and \(r\) divides \(M\). Partition the
\(M\) modes into \(K=M/r\) disjoint blocks
\begin{equation}
    B_1,\ldots,B_K,
    \qquad
    \lvert B_k\rvert=r.
\end{equation}
If \(N=r\), define
\begin{equation}
    \lvert\psi\rangle
    =
    \frac{1}{\sqrt K}
    \sum_{k=1}^K \lvert B_k\rangle .
\end{equation}
If \(N=M-r\), define
\begin{equation}
    \lvert\psi\rangle
    =
    \frac{1}{\sqrt K}
    \sum_{k=1}^K \lvert B_k^c\rangle .
\end{equation}
Then \(\GammaOne=(N/M)I\), so \(\lvert\psi\rangle\) saturates the sharp
fixed-\(N\) antiflatness bound.
\end{theorem}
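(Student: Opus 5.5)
The plan is to compute the 1RDM of $\lvert\psi\rangle$ entry by entry in the occupation basis. We separate diagonal entries, which count how often each mode is occupied, from off-diagonal entries, which are controlled by the one-hop condition, Eq.~\eqref{eq:one-hop-condition}. We first treat the case $N=r$ and then deduce $N=M-r$, either by the same computation or by particle--hole duality.

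For $N=r$, the state is a normalized equal-weight superposition of the $K$ determinants $\lvert B_k\rangle$. These are mutually orthogonal because the blocks are distinct, so $\langle\psi|\psi\rangle=1$.

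\emph{Diagonal entries.} Since $a_i^\dagger a_i$ is diagonal in the occupation basis, we have
\begin{equation}
\Gamma_{ii}=\frac{1}{K}\sum_{k=1}^K \langle B_k|a_i^\dagger a_i|B_k\rangle=\frac{1}{K}\,\#\{k: i\in B_k\}.
\end{equation}
Because the blocks partition the modes, each $i$ lies in exactly one block, so $\Gamma_{ii}=1/K=r/M=N/M$.

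\emph{Off-diagonal entries.} For $i\neq j$, $a_i^\dagger a_j\lvert B_k\rangle$ is either zero or $\pm$ a determinant with support $(B_k\setminus\{j\})\cup\{i\}$. A nonzero overlap with $\langle B_l|$ would require $\lvert B_k\cap B_l\rvert=N-1=r-1$ with $B_k\neq B_l$. Distinct blocks are disjoint, so $\lvert B_k\cap B_l\rvert=0$, and $r-1\geq1$ because of the hypothesis $r>1$. Hence no such pair exists. The diagonal term $k=l$ also vanishes, since $(B_k\setminus\{j\})\cup\{i\}\neq B_k$. Therefore $\Gamma_{ij}=0$, and $\Gamma=(N/M)I$.

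For $N=M-r$ we do the analogous computation with supports $B_k^c$. Mode $i$ lies in $B_k^c$ for exactly $K-1$ values of $k$, so $\Gamma_{ii}=(K-1)/K=(M-r)/M=N/M$. Two distinct complements satisfy $\lvert B_k^c\cap B_l^c\rvert=M-2r$. This must be compared with $N-1=M-r-1$, and the two differ because $r\neq1$. So the one-hop condition again fails and the off-diagonal entries vanish. Alternatively, one can invoke the particle--hole duality stated at the start of this section: the hole-picture state has the form of the $N=r$ construction, and its flat 1RDM maps to $\Gamma=((M-r)/M)I$.

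Saturation of the bound then follows directly from the equality condition of the first theorem, Eq.~\eqref{eq:flat_1rdm}. The only step needing care is the off-diagonal argument: we must confirm that the hypothesis $r>1$ is exactly what rules out single hops between distinct blocks. For $r=1$ the disjoint blocks would differ by one hop and coherences would appear, consistent with $r=1$ being an impossible sector in Theorem~\ref{thm:existence-classification}. Fermionic sign conventions play no role, because every relevant matrix element is zero.
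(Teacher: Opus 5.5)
Your proposal is correct and follows essentially the same route as the paper. The diagonal entries come from counting block membership, giving $1/K$ and $(K-1)/K$ respectively, and the off-diagonal entries vanish because distinct blocks (or complement blocks) cannot be related by a single particle hop when $r>1$. Your explicit count $\lvert B_k^c\cap B_l^c\rvert=M-2r\neq M-r-1$ and your check of the $k=l$ term simply spell out what the paper states in one line.
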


\begin{proof}
First suppose \(N=r\). Each mode belongs to exactly one block, and each block
appears with probability \(1/K\). Hence every diagonal 1RDM entry is
\begin{equation}
    \GammaOne_{jj}
    =
    \frac{1}{K}
    =
    \frac{r}{M}
    =
    \frac{N}{M}.
\end{equation}
For \(i\neq j\), \(\GammaOne_{ij}\) can be nonzero only if two determinants in
the superposition differ by one particle hop. Distinct block determinants
are disjoint \(r\)-element sets. Since \(r>1\), they differ by more than one
particle hop, and therefore all off-diagonal 1RDM entries vanish. Thus
\(\GammaOne=(r/M)I\).

The case \(N=M-r\) is the particle--hole dual. Each mode is absent from
exactly one complement block and is occupied in the remaining \(K-1\) terms,
so its occupation is
\begin{equation}
    \frac{K-1}{K}
    =
    1-\frac{r}{M}
    =
    \frac{M-r}{M}.
\end{equation}
Again, distinct complement-block determinants differ by more than one
particle hop, so all off-diagonal entries vanish.
\end{proof}

\begin{example}
For \(M=4\) and \(N=2\), the two-block state
\begin{equation}
    \lvert\psi\rangle
    =
    \frac{\lvert1100\rangle+\lvert0011\rangle}{\sqrt2}
\end{equation}
has
\begin{equation}
    \GammaOne=\frac12 I
\end{equation}
and therefore attains the fixed-\(N\) antiflatness bound.
\end{example}

The block construction covers all even two-particle sectors and the
half-filled sectors \(M=2r\). It does not cover every allowed pair. For
example, when \((M,N)=(7,3)\), the modes cannot be partitioned into
three-element blocks, even though a flat-1RDM pure state exists. We next use
cyclic symmetry to treat these nondivisible cases.

\subsection{Cyclic-orbit saturators}
\label{subsec:cyclic-saturators}

Label the modes by \(\mathbb{Z}_M\). For \(S\subset\mathbb{Z}_M\), define
\begin{equation}
    S+t
    =
    \{s+t\pmod M:\ s\in S\}.
\end{equation}
An equal-amplitude superposition over all cyclic translates of \(S\)
automatically gives uniform diagonal occupations. The remaining requirement
is that the orbit does not generate the one-particle-hop coherences described
in Eq.~\eqref{eq:one-hop-condition}.

\begin{theorem}[Cyclic-orbit criterion]
\label{thm:cyclic-criterion}
Let \(S\subset\mathbb{Z}_M\) be an \(N\)-element subset whose cyclic orbit has
length \(M\), and define
\begin{equation}
    \lvert\psi_S\rangle
    =
    \frac{1}{\sqrt M}
    \sum_{t\in\mathbb{Z}_M}\lvert S+t\rangle .
\end{equation}
Assume that no nontrivial translate of \(S\) overlaps \(S\) in \(N-1\) sites:
\begin{equation}
    \lvert S\cap(S+t)\rvert\leq N-2,
    \qquad
    t\not\equiv0\pmod M.
\end{equation}
Then
\begin{equation}
    \GammaOne=\frac{N}{M}I,
\end{equation}
and \(\lvert\psi_S\rangle\) saturates the sharp fixed-\(N\) antiflatness
bound.
\end{theorem}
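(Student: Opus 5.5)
The plan is to compute the 1RDM of \(\lvert\psi_S\rangle\) directly in the occupation basis, separating diagonal and off-diagonal entries.

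\emph{Normalization.} Because the orbit of \(S\) has length \(M\), the translates \(S+t\) for \(t\in\mathbb{Z}_M\) are pairwise distinct \(N\)-element sets. Distinct occupation-basis determinants are orthonormal, so \(\langle\psi_S|\psi_S\rangle=1\) and
\begin{equation}
    \GammaOne_{ij}
    =
    \frac{1}{M}\sum_{t,u\in\mathbb{Z}_M}
    \langle S+u\rvert a_i^\dagger a_j\lvert S+t\rangle .
\end{equation}

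\emph{Diagonal entries.} For \(i=j\), only the terms \(u=t\) survive, and each contributes \(1\) when \(j\in S+t\). The number of translates containing \(j\) equals the number of \(t\) with \(j-t\in S\). This is exactly \(\lvert S\rvert=N\), since \(t\mapsto j-t\) is a bijection of \(\mathbb{Z}_M\). Hence \(\GammaOne_{jj}=N/M\) for every \(j\).

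\emph{Off-diagonal entries.} For \(i\neq j\), the operator \(a_i^\dagger a_j\) maps \(\lvert S+t\rangle\) to \(\pm\lvert (S+t)\setminus\{j\}\cup\{i\}\rangle\), or to zero. The matrix element with \(\langle S+u\rvert\) is therefore nonzero only if \(S+u\) and \(S+t\) are distinct and satisfy \(\lvert (S+u)\cap(S+t)\rvert=N-1\), which is Eq.~\eqref{eq:one-hop-condition}. Translating both sets by \(-t\) preserves cardinalities of intersections, so this overlap equals \(\lvert S\cap(S+(u-t))\rvert\) with \(u-t\not\equiv 0\). By hypothesis this quantity is at most \(N-2\). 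Every off-diagonal contribution therefore vanishes individually, with no cancellation needed, so sign bookkeeping for the fermionic phases is unnecessary. This gives \(\GammaOne=(N/M)I\), and saturation of the bound then follows from Eq.~\eqref{eq:gap_identity} of the first theorem.

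The only step needing care is the translation-invariance of overlaps, which converts the pairwise condition over all pairs \((t,u)\) into the single-translate hypothesis. Combined with the orbit-length assumption that guarantees orthogonality, this is where the argument could fail if either hypothesis were dropped. Everything else is bookkeeping.
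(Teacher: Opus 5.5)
Your proof is correct and follows the same route as the paper: uniform diagonal occupations from the cyclic orbit, and vanishing off-diagonal entries because any one-hop pair of translates would violate the overlap hypothesis. Your explicit bijection count for \(\Gamma_{jj}=N/M\) and the reduction \(\lvert (S+u)\cap(S+t)\rvert=\lvert S\cap(S+(u-t))\rvert\) spell out what the paper's proof compresses into its appeals to translation invariance and to the overlap condition.
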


\begin{proof}
Translation invariance forces all diagonal entries of \(\GammaOne\) to be
equal. Since \(\Tr\GammaOne=N\), each diagonal entry is \(N/M\).

For off-diagonal elements, a nonzero contribution to
\(\langle a_i^\dagger a_j\rangle\) requires two determinants in the orbit to
differ by one particle hop. For fixed-\(N\) determinants, this is equivalent
to their supports overlapping in exactly \(N-1\) sites. The overlap
condition excludes this possibility for every nontrivial translate. Hence
all off-diagonal 1RDM entries vanish.
\end{proof}

The cyclic-orbit criterion gives a general condition. The following explicit
family satisfies it in the remaining generic sectors.

\begin{theorem}[Explicit cyclic family]
\label{thm:explicit-cyclic}
Let
\begin{equation}
    r=\min(N,M-N),
    \qquad
    S_r=\{0,1,\ldots, r-3,r-2,r\}\subset\mathbb{Z}_M .
\end{equation}
If \(r\geq3\) and \(M\geq2r+1\), then the equal-amplitude cyclic-orbit state
built from \(S_r\) is a pure fixed-\(r\) state with
\begin{equation}
    \GammaOne=\frac{r}{M}I.
\end{equation}
Its particle--hole dual is a pure fixed-\((M-r)\) state with
\begin{equation}
    \GammaOne=\frac{M-r}{M}I.
\end{equation}
\end{theorem}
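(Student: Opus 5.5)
The plan is to reduce everything to the cyclic-orbit criterion (Theorem~\ref{thm:cyclic-criterion}) applied to $S=S_r$ in the $r$-particle sector, and then obtain the $(M-r)$-particle statement by particle--hole duality, exactly as set up at the start of Section~\ref{sec:saturators}. The criterion has two hypotheses: (i) the cyclic orbit of $S_r$ has length $M$, and (ii) $\lvert S_r\cap(S_r+t)\rvert\le r-2$ for every $t\not\equiv 0\pmod M$. Once both are verified, Theorem~\ref{thm:cyclic-criterion} gives $\GammaOne=(r/M)I$ directly.

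For the overlap condition (ii), note that $S_r$ lies inside the window $\{0,\ldots,r\}$ of $r+1$ consecutive residues, with the single gap at $r-1$. Since $M\ge 2r+1$, this window occupies fewer than half the circle. I will split into cases according to the cyclic distance $d=\min(t,M-t)$ with $1\le d\le \lfloor M/2\rfloor$. A translate by $\pm d$ can only meet $S_r$ in points lying in both windows, and the windows' intersection has $r+1-d$ elements; the condition $M\ge 2r+1$ rules out any wrap-around meeting from the other side. Hence for $d\ge 3$ the overlap is at most $r-2$ automatically.

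The cases $d=1$ and $d=2$ need a direct count, and this is the main (though small) obstacle, because it is where the gap at $r-1$ matters.
\begin{itemize}
\item For $t=1$: we have $S_r+1=\{1,\ldots,r-1,r+1\}$, so the intersection with $S_r$ is $\{1,\ldots,r-2\}$, of size $r-2$.
\item For $t=2$: we have $S_r+2=\{2,\ldots,r,r+2\}$, so the intersection is $\{2,\ldots,r-2\}\cup\{r\}$, of size $r-2$.
\end{itemize}
The cases $t=-1,-2$ follow by symmetry, since $\lvert S\cap(S-t)\rvert=\lvert(S+t)\cap S\rvert$. Here $r\ge3$ is used so that the set $S_r$ is well defined with the gap strictly inside it. I would also check the smallest instance $r=3$, $S_3=\{0,1,3\}$, to confirm the counts are consistent.

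The orbit-length condition (i) then follows from (ii) for free. If $S_r+t=S_r$ for some nontrivial $t$, the overlap would be $r>r-2$, a contradiction. So the stabilizer is trivial and the orbit has length $M$.

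Finally, for $N=M-r$, I take the particle--hole dual of $\lvert\psi_{S_r}\rangle$, that is, the equal superposition of the complements $\lvert (S_r+t)^c\rangle$. Under $a_i\leftrightarrow a_i^\dagger$ the 1RDM maps to $I-\GammaOne^{T}$, so $\GammaOne=((M-r)/M)I$. One can also see this directly: complementation preserves the symmetric differences between supports, so the one-hop exclusion is preserved, and translation invariance still forces uniform diagonal entries.
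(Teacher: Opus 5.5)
Your proposal is correct and follows essentially the same route as the paper: reduce to the cyclic-orbit criterion, use $M\ge 2r+1$ to rule out wraparound for shifts $1\le t\le\lfloor M/2\rfloor$ (with the remaining shifts handled by $\lvert S\cap(S-t)\rvert=\lvert S\cap(S+t)\rvert$), count the small shifts explicitly, and obtain the $N=M-r$ case by particle--hole duality. The differences are minor: you bound the overlap for $d\ge3$ by the window intersection $r+1-d$ instead of computing the exact values $r-t$, and you get full orbit length as a consequence of the overlap bound rather than from the paper's unique-long-gap argument, which is a slightly slicker way to discharge that hypothesis.
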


\begin{proof}
The set \(S_r\) has \(r\) elements. Since \(M\geq 2r+1\), the cyclic gap from \(r\) back to \(0\)
has length \(M-r>2\), while all other gaps have length one except
for the single gap of length two between \(r-2\) and \(r\).
Hence this long gap is unique, so no nontrivial translation can leave
\(S_r\) invariant. The cyclic orbit therefore has full length \(M\).

It remains to verify the overlap condition. By symmetry,
\[
|S_r\cap(S_r+t)|
=
|S_r\cap(S_r-t)|,
\]
so it is enough to consider
\[
1\leq t\leq \left\lfloor \frac{M}{2}\right\rfloor .
\]
Because \(M\geq 2r+1\), we have
\(r+\lfloor M/2\rfloor\leq M-1\), so no wraparound occurs
for these shifts.

For \(1\leq t\leq r\), we have
\[
S_r+t
=
\{t,t+1,\ldots,t+r-2,t+r\}.
\]
The overlap can then be counted explicitly:
\[
|S_r\cap(S_r+t)|
=
\begin{cases}
r-2, & t=1,\\[3pt]
r-t, & 2\leq t\leq r-2,\\[3pt]
1, & t=r-1,r.
\end{cases}
\]
In every case,
\[
|S_r\cap(S_r+t)|\leq r-2.
\]

For
\[
r<t\leq \left\lfloor \frac{M}{2}\right\rfloor,
\]
the shifted set lies entirely to the right of \(S_r\) without
wrapping around modulo \(M\), so
\[
S_r\cap(S_r+t)=\varnothing.
\]
Thus every nonzero translation satisfies
\[
|S_r\cap(S_r+t)|\leq r-2,
\]
and the cyclic-orbit criterion applies.
\end{proof}

Representative non-block supports are
\begin{align}
    (M,N)=(7,3):\qquad &S_3=\{0,1,3\},
    \\
    (M,N)=(9,4):\qquad &S_4=\{0,1,2,4\}.
\end{align}

\subsection{Obstructions}
\label{subsec:obstructions}

The two constructions establish existence for all \(r\geq3\) sectors and for
\(r=2\) when \(M\) is even. The remaining cases are \(r=1\) and \(r=2\) with
odd \(M\). These are genuine pure-state obstructions rather than limitations
of the constructions.

\begin{lemma}[One-particle rank obstruction]
\label{lem:rank-obstruction}
If \(M>1\), no pure one-particle state on \(M\) modes has
\begin{equation}
    \GammaOne=\frac{1}{M}I.
\end{equation}
By particle--hole symmetry, no pure one-hole state has
\begin{equation}
    \GammaOne=\frac{M-1}{M}I.
\end{equation}
\end{lemma}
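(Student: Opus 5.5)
The plan is to compute the 1RDM of an arbitrary pure one-particle state directly and show that it has rank one. Write a general normalized one-particle state as
\begin{equation}
    \lvert\psi\rangle=\sum_{k=1}^M c_k\,a_k^\dagger\lvert 0\rangle,
    \qquad
    \sum_{k}\lvert c_k\rvert^2=1.
\end{equation}
The matrix elements are then
\begin{equation}
    \GammaOne_{ij}=\langle\psi|a_i^\dagger a_j|\psi\rangle=\bar c_i c_j .
\end{equation}
To verify this, note that \(a_j\lvert\psi\rangle=c_j\lvert0\rangle\) by the canonical anticommutation relations~\eqref{eq:car}, and likewise \(\langle\psi|a_i^\dagger=\bar c_i\langle0|\). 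Thus \(\GammaOne=\bar c\,c^{T}\), which is the projector onto a single normalized vector, namely \(\bar c\). Its spectrum is therefore \(\{1,0,\ldots,0\}\), and its rank is one.

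The contradiction now follows by comparing ranks. The scalar matrix \((1/M)I\) has rank \(M\), so for \(M>1\) it cannot equal a rank-one matrix. Equivalently, one can compare purities: \(\Tr(\GammaOne^2)=1\), while \(\Tr[(I/M)^2]=1/M<1\). The same point can be phrased through Eq.~\eqref{eq:f1_natural_occupations}: every pure one-particle state has natural occupations in \(\{0,1\}\). Hence \(F_1=0<B_{M,1}\), and saturation is impossible.

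For the hole statement, I use the particle--hole duality already invoked before Theorem~\ref{thm:existence-classification}. The map \(a_i\leftrightarrow a_i^\dagger\) is implemented by an antiunitary (or unitary) transformation that takes pure \(N\)-particle states to pure \((M-N)\)-particle states. Under this map, \(\GammaOne\mapsto I-\GammaOne^{T}\) (or \(I-\GammaOne\) up to complex conjugation). A pure one-hole state with \(\GammaOne=((M-1)/M)I\) would therefore map to a pure one-particle state with \(\GammaOne=(1/M)I\), which the first part has excluded.

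Alternatively, the hole case can be checked directly. Any \((M-1)\)-particle state equals \(\sum_k c_k a_k\lvert\mathrm{full}\rangle\), and its 1RDM works out to \(I-c\,c^\dagger\) up to conjugation conventions. This has an eigenvalue \(0\), so it is not \(((M-1)/M)I\).

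None of these steps should be a real obstacle. The only point needing care is getting the transformation rule \(\GammaOne\mapsto I-\GammaOne^{T}\) correct, including the signs and conjugations. I avoid depending on those details by using the direct computation for the hole case.
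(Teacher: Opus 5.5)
Your proposal is correct and follows the paper's argument: the 1RDM of a pure one-particle state is a rank-one projector, which cannot equal the rank-$M$ matrix $(1/M)I$ when $M>1$, and the one-hole case follows by particle--hole duality. Your direct check that a one-hole state has $\Gamma = I - cc^\dagger$, which has a zero eigenvalue, is a correct supplement. Your form $\bar c\,c^{T}$ is in fact the precise one under the convention $\Gamma_{ij}=\langle a_i^\dagger a_j\rangle$; the paper writes its transpose $cc^\dagger$, which does not affect the rank argument.
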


\begin{proof}
A pure one-particle state has the form
\begin{equation}
    \lvert\psi\rangle
    =
    \sum_{j=1}^M c_j a_j^\dagger\lvert0\rangle,
    \qquad
    \sum_j\lvert c_j\rvert^2=1.
\end{equation}
Its 1RDM is
\begin{equation}
    \GammaOne=cc^\dagger,
\end{equation}
which is a rank-one projection. By contrast, \((1/M)I\) has rank \(M\).
The two can therefore agree only in the trivial one-mode case. The one-hole
statement follows by particle--hole symmetry.
\end{proof}

\begin{lemma}[Odd two-fermion obstruction]
\label{lem:two-fermion-obstruction}
If \(M\) is odd, no pure two-particle state on \(M\) modes has
\begin{equation}
    \GammaOne=\frac{2}{M}I.
\end{equation}
By particle--hole symmetry, no pure two-hole state has
\begin{equation}
    \GammaOne=\frac{M-2}{M}I.
\end{equation}
\end{lemma}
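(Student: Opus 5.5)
The plan is to use the standard antisymmetric-matrix (Slater/Schliemann) normal form of a two-fermion pure state. Any pure two-particle state can be written as
\begin{equation}
    \lvert\psi\rangle=\frac{1}{2}\sum_{i,j=1}^{M} w_{ij}\,a_i^\dagger a_j^\dagger\lvert0\rangle,
    \qquad w^T=-w,
\end{equation}
with normalization fixed by \(\Tr(w^\dagger w)=2\). A direct computation using Eq.~\eqref{eq:car} gives the 1RDM in terms of \(w\) alone. Up to a complex conjugation and transpose that do not affect the spectrum, we get \(\GammaOne= w^\dagger w\) (equivalently \(\bar w w^{T}\)). This is the first step, and it is routine bookkeeping of signs from anticommuting the creation operators.

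The second step invokes the canonical form of a complex antisymmetric matrix under unitary congruence (the Youla decomposition, as used in \cite{SchliemannEtAl2001}). There is a unitary \(U\) with \(w=U\,D\,U^T\), where
\begin{equation}
D=\bigoplus_{k}\begin{pmatrix}0&z_k\\-z_k&0\end{pmatrix}\oplus 0,
\qquad z_k\geq 0 .
\end{equation}
Applying the corresponding single-particle unitary \(a_i^\dagger\mapsto\sum_j U_{ji}a_j^\dagger\) puts the state in Slater normal form \(\sum_k z_k\, b_{2k-1}^\dagger b_{2k}^\dagger\lvert 0\rangle\). The 1RDM is unitarily equivalent to \(D^\dagger D\). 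The key consequence is that every nonzero eigenvalue of \(\GammaOne\) appears with even multiplicity, namely \(\lvert z_k\rvert^2\) twice. The total multiplicity of the nonzero spectrum is therefore even, so the rank of \(\GammaOne\) is even.

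The conclusion is then immediate. If \(\GammaOne=(2/M)I\), the rank is \(M\) and all eigenvalues are nonzero. Evenness of the rank forces \(M\) to be even, which contradicts \(M\) odd. (Equivalently, an antisymmetric matrix of odd size is singular, because \(\det w=\det w^T=(-1)^M\det w\). Hence \(w^\dagger w\) has a zero eigenvalue and cannot be a positive multiple of the identity. I would present this as the shortest route and keep the normal form as the conceptual explanation.)

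For the two-hole statement, apply the particle--hole map \(a_i\leftrightarrow a_i^\dagger\). It sends \(\GammaOne\) to \(I-\GammaOne^{T}\), so a two-hole state with \(\GammaOne=((M-2)/M)I\) would give a two-particle state with \(\GammaOne=(2/M)I\), which is excluded. The only mildly delicate point is the first step: confirming that \(\GammaOne\) is exactly \(w^\dagger w\) up to transpose or conjugation, with the factor fixed by normalization. Even a wrong constant would not matter, since only the rank is used.
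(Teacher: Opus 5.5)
Your proposal is correct and follows essentially the paper's proof. Both arguments use the Slater (Youla) canonical form of a pure two-fermion state to show that the nonzero natural occupations come in degenerate pairs, so a flat rank-$M$ 1RDM is impossible for odd $M$, and both handle the two-hole case by particle--hole duality. Your explicit identity $\GammaOne=w^\dagger w$ (which in fact holds exactly, since $w^\dagger=-\bar w$) and the determinant shortcut $\det w=(-1)^M\det w$ simply make the paper's citation-based argument self-contained.
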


\begin{proof}
The canonical form of a pure two-fermion state implies that its nonzero
natural occupation numbers occur in degenerate pairs
\cite{SchliemannEtAl2001}. A completely flat two-particle 1RDM in odd
dimension would instead have \(M\) equal nonzero eigenvalues \(2/M\), which
is incompatible with pair degeneracy. The two-hole statement follows by
particle--hole symmetry.
\end{proof}

\subsection{Completion of the classification}
\label{subsec:classification}

We now complete the proof of
Theorem~\ref{thm:existence-classification}. The cases \(r=0\) correspond to
the vacuum and fully occupied sectors and are immediate. For \(r=2\) and
even \(M\), Theorem~\ref{thm:block-saturators} applies by partitioning the
modes into two-element blocks.

For \(r\geq3\), the half-filled case \(M=2r\) is covered by the block
construction with two \(r\)-element blocks. If \(M\geq2r+1\),
Theorem~\ref{thm:explicit-cyclic} gives an explicit cyclic saturator. Since
\(r\leq M/2\), these cases exhaust all sectors with \(r\geq3\).

The only remaining possibilities are \(r=1\) and \(r=2\) with odd \(M\),
which are ruled out by Lemmas~\ref{lem:rank-obstruction} and
\ref{lem:two-fermion-obstruction}, respectively. This completes the proof of
Theorem~\ref{thm:existence-classification}.

Together with the sharp bound derived in Section~II, the classification gives
the complete answer to the fixed-\(N\) saturation problem. In every allowed
sector,
\begin{equation}
    \Fone=\BMN
\end{equation}
is attained by an explicit flat-1RDM state. In the excluded sectors,
\(\GammaOne=(N/M)I\) cannot be realized by any pure state, and the sharp
bound therefore cannot be saturated.

\begin{figure}[htbp]
    \centering
    \includegraphics[width=0.72\linewidth]{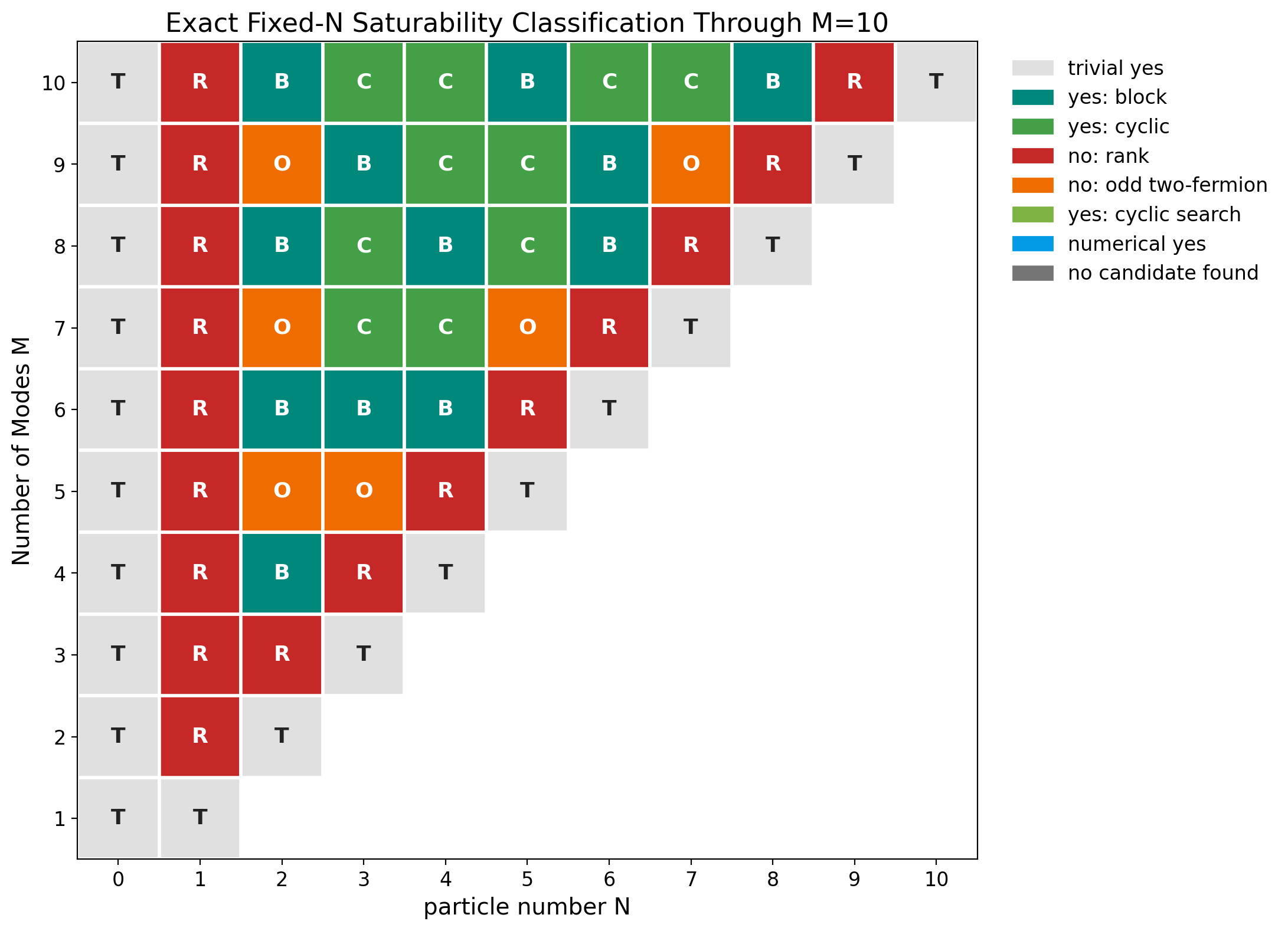}
    \caption{Small-sector saturability classification through \(M\leq10\).
    \(T\): trivial vacuum/full sectors. \(B\): block saturators. \(C\):
    cyclic-family saturators. \(R\): one-particle or one-hole rank obstruction.
    \(O\): odd-\(M\) two-fermion or two-hole obstruction.}
    \label{fig:saturability-heatmap}
\end{figure}

\section{Beyond Maximal First Antiflatness:
Connected Four-Point Structure}
\label{sec:wick-error}

The previous sections identify and classify the pure fixed-\(N\) states that
maximize the first fermionic antiflatness. In every allowed sector these
states satisfy
\begin{equation}
\Gamma=\frac{N}{M}I_M,
\end{equation}
and therefore attain the same maximal value \(F_1=B_{M,N}\). As discussed
in Sec.~II, the same flat occupation spectrum also maximizes the
relative-entropy distance from the Gaussian set at fixed \((M,N)\).

These statements do not imply that all maximizers have the same many-body
correlations. First antiflatness depends only on the covariance matrix, or
equivalently on the 1RDM in a fixed-particle-number sector. Once this
one-body information is fixed, connected correlations at higher order remain
undetermined. We therefore ask a second extremal question: how much
four-point structure can vary among states that are already maximally
antiflat?

\subsection{Four-point Wick error and the connected 2RDM}

For distinct Majorana indices
\(\alpha<\beta<\mu<\nu\), define the four-point Wick-error tensor
\begin{align}
W_{\alpha\beta\mu\nu}
&=
\langle
\gamma_\alpha\gamma_\beta\gamma_\mu\gamma_\nu
\rangle
-
\langle\gamma_\alpha\gamma_\beta\rangle
\langle\gamma_\mu\gamma_\nu\rangle
\nonumber\\
&\quad
+
\langle\gamma_\alpha\gamma_\mu\rangle
\langle\gamma_\beta\gamma_\nu\rangle
-
\langle\gamma_\alpha\gamma_\nu\rangle
\langle\gamma_\beta\gamma_\mu\rangle .
\label{eq:majorana-wick-error}
\end{align}
For a fermionic Gaussian state this tensor vanishes by Wick's theorem
\cite{CoffmanSmithGao2025}. We characterize its overall magnitude by
\begin{align}
  \norm{\Wick}_2^2
  =
  \sum_{\alpha<\beta<\mu<\nu}
  \abs{\Wick_{\alpha\beta\mu\nu}}^2,
  \nonumber\\
  \RMS(\Wick)
  =
  \left(
  \frac{\norm{\Wick}_2^2}{\binom{2M}{4}}
  \right)^{1/2}.
  \label{eq:wick-rms}
\end{align}
The RMS normalization separates the strength of a typical connected
four-Majorana entry from the combinatorial growth in the number of Majorana
quadruples.

In a fixed-\(N\) sector, the same connected quartic information can be
written entirely in number-conserving form. Expanding a four-Majorana
correlator produces quartic monomials containing different numbers of
creation and annihilation operators. Any term with a nonzero net change in
particle number has vanishing expectation value in a state with definite
\(N\). Therefore, only the terms containing two creation and two
annihilation operators contribute. These are precisely the matrix elements
of the two-body reduced density matrix,
\begin{equation}
D_{ij;kl}
=
\langle
a_i^\dagger a_j^\dagger a_l a_k
\rangle .
\label{eq:2rdm}
\end{equation}
Its Wick prediction from the 1RDM is
\begin{equation}
D^{\mathrm{Wick}}_{ij;kl}
=
\Gamma_{ik}\Gamma_{jl}
-
\Gamma_{il}\Gamma_{jk},
\end{equation}
The connected, or cumulant, part of the 2RDM is defined by
\cite{Mazziotti1998,KutzelniggMukherjee1999}
\begin{equation}
\Lambda_{ij;kl}
=
D_{ij;kl}
-
\left(
\Gamma_{ik}\Gamma_{jl}
-
\Gamma_{il}\Gamma_{jk}
\right).
\label{eq:connected-2rdm}
\end{equation}

\begin{proposition}[Fixed-\(N\) quartic equivalence]
\label{prop:wick-connected-2rdm}
For a fixed-particle-number state, the four-Majorana Wick-error tensor
\(W\) and the connected 2RDM cumulant \(\Lambda\) encode the same connected
quartic correlations, written in different fermionic bases.
\end{proposition}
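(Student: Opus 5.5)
The plan is to make the statement precise as a linear isomorphism. The quartic Majorana correlators, with all pair (Wick) contractions subtracted, should correspond under the fixed change of basis \eqref{eq:majoranas} to the particle-number-conserving quartic cumulants, which are exactly the entries of \(\Lambda\). Concretely, I would prove that there is an invertible linear map \(L\), depending only on \(M\) and not on the state, such that \(W=L(\Lambda)\) for every fixed-\(N\) state, together with an explicit inverse.

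\textbf{Step 1 (multilinear expansion).} Write \(\gamma_{2i-1}=a_i+a_i^\dagger\) and \(\gamma_{2i}=-i(a_i-a_i^\dagger)\). Each Majorana is then \(\gamma_\mu=\sum_{\epsilon=\pm}u_{\mu}^{\epsilon}a^{\epsilon}_{m(\mu)}\), with \(a^{+}=a^\dagger\) and \(a^-=a\). The matrix \(u\) is the unitary block matrix \(\tfrac{1}{\sqrt2}\)-rescaled from \(\begin{pmatrix}1&1\\ i&-i\end{pmatrix}\) in each mode. By multilinearity,
\[\langle\gamma_\alpha\gamma_\beta\gamma_\mu\gamma_\nu\rangle=\sum_{\epsilon}u^{\epsilon_1}_\alpha u^{\epsilon_2}_\beta u^{\epsilon_3}_\mu u^{\epsilon_4}_\nu\langle a^{\epsilon_1}a^{\epsilon_2}a^{\epsilon_3}a^{\epsilon_4}\rangle.\]
The same expansion applies to every two-point factor.

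\textbf{Step 2 (number selection and normal ordering).} By the argument leading to \eqref{eq:anomalous_vanish}, every string with \(\sum\epsilon\neq0\) has zero expectation. Only strings with two creators and two annihilators survive. For distinct Majorana indices, each such string can be brought to the normal order \(a_i^\dagger a_j^\dagger a_l a_k\) using \eqref{eq:car}.

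The key point is that the Wick subtraction in \eqref{eq:majorana-wick-error} is itself the multilinear image of the fermionic Wick decomposition. The antisymmetrized Majorana contraction structure maps termwise onto the contraction structure of fermionic operators, with pair contractions \(\langle a^{\epsilon}a^{\epsilon'}\rangle\). In the fixed-\(N\) sector these are only \(\Gamma\) and \(\delta-\Gamma^T\). Hence the fermionic four-point function minus its Wick part equals \(\Lambda_{ij;kl}\), and the anticommutator terms generated by reordering cancel between the correlator and its Wick part.

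This cancellation is the step I expect to be the main obstacle. The concern is whether the reordering constants (the \(\delta\) terms) and the restriction to distinct Majorana indices drop out consistently. To handle it, I would use the standard fact that truncated (connected) correlation functions are multilinear in the field operators and invariant under linear changes of field basis. This avoids bookkeeping by hand. As a safeguard, I would check the single-mode-pair case \(\alpha=2i-1,\beta=2i,\mu=2j-1,\nu=2j\) explicitly against \(\Lambda_{ij;ij}\).

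\textbf{Step 3 (invertibility).} The result of Steps 1--2 is \(W=(u^{\otimes4})\cdot\tilde\Lambda\). Here \(\tilde\Lambda\) is the full connected fermionic four-point tensor in the \(a,a^\dagger\) basis. Its only nonzero components are entries of \(\Lambda\) or their adjoint or permuted copies, since anomalous cumulants vanish by number conservation. Because \(u\) is invertible (unitary up to scale), \(u^{\otimes 4}\) is invertible. Therefore \(\Lambda\) is recovered linearly from \(W\), and \(W\) from \(\Lambda\).

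Two remaining points need checking. First, the tensor components with repeated Majorana indices are excluded in \(W\); these do not lose information, since the corresponding fermionic components with coincident modes in \(\Lambda\) are recovered from distinct-Majorana entries, because \(\gamma_{2i-1},\gamma_{2i}\) are distinct. Second, antisymmetry of \(W\) in its four indices matches the antisymmetry of \(\Lambda\) in \((i,j)\) and in \((k,l)\) together with its Hermiticity.
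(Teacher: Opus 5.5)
Your proposal is correct and takes essentially the paper's route. Both arguments use the invertible linear Majorana--fermion change of basis, discard the non-number-conserving monomials, and cancel the anticommutator terms against the disconnected Wick contractions, which leaves exactly the entries of \(\Lambda\); for instance, \(\Lambda_{ij;kl}\) is the \((a_i^\dagger,a_j^\dagger,a_l,a_k)\) component of the truncated four-point function, and your safeguard check gives \(W_{2i-1,2i,2j-1,2j}=-4\Lambda_{ij;ij}\) for \(i<j\).

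Your appeal to multilinearity of the truncated function is what makes the paper's one-line cancellation claim rigorous: that function is totally antisymmetric because anticommutators of linear fields are c-numbers, so the reordering \(\delta\)-terms cancel identically. The only quibble is that \(\Lambda\mapsto W\) is injective onto the charge-zero four-forms rather than invertible onto all of them, and that is all that mutual determination needs.
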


\begin{proof}
The creation and annihilation operators are linear combinations of
Majorana operators,
\begin{equation}
a_j
=
\frac{\gamma_{2j-1}+i\gamma_{2j}}{2},
\qquad
a_j^\dagger
=
\frac{\gamma_{2j-1}-i\gamma_{2j}}{2}.
\end{equation}
The transformation between the Majorana operators and the operators
\(a_j,a_j^\dagger\) is linear and invertible, so the connected
fourth-order correlators in the two representations determine one another.
Expanding a four-Majorana product therefore produces quartic monomials in
\(a\) and \(a^\dagger\). In a fixed-\(N\) state, only the
number-conserving terms with two creation and two annihilation operators
can have nonzero expectation values. Terms generated by fermionic
anticommutation at lower order are precisely removed by the disconnected
Wick contractions. The remaining connected terms are the components of
Eq.~\eqref{eq:connected-2rdm}. Thus \(W\) and \(\Lambda\) contain the same
connected quartic information, expressed in different fermionic bases.
\end{proof}

Thus the 1RDM and connected 2RDM provide the first two levels of the
fixed-\(N\) correlation hierarchy:
\begin{equation}
\Gamma
\quad\longrightarrow\quad
\Lambda
\quad\longrightarrow\quad
\text{higher connected cumulants}.
\end{equation}
The first level fixes \(F_1\), while the second resolves correlations that
remain invisible after the one-body data have been made completely flat.

The distinction between these two levels has a simple interpretation in
the occupation basis. The 1RDM sees diagonal occupations and coherences
between determinants whose supports differ by a single particle hop,
\begin{equation}
|S\cap T|=N-1.
\end{equation}
By contrast, the off-diagonal part of \(\Lambda\) records pair-transfer
coherences between determinants that differ by two particle moves,
\begin{equation}
|S\cap T|=N-2.
\end{equation}
The diagonal part records occupation correlations such as
\begin{equation}
\langle n_i n_j\rangle
-
\langle n_i\rangle\langle n_j\rangle.
\end{equation}
For a flat-1RDM saturator, all
\(\langle n_i\rangle=N/M\) are fixed, so any remaining two-mode
correlation structure must live in \(\Lambda\).

Flat one-body data therefore do not determine \(\Lambda\). Figure~2 shows
the exact four-point Wick-error RMS for the block and cyclic saturators as
a function of \(M\). All states within a given \((M,N)\) sector share the
same flat 1RDM and the same maximal first antiflatness, but the connected
tensor \(\Lambda\) is not fixed by these data. Its Frobenius norm is,
however, rigid in the smallest sectors: for \(N\leq 3\),
\(\|\Lambda\|_F\) takes the same value on the entire flat manifold, and
for \(N=2\) all flat states are related by a single-particle unitary
\cite{SchliemannEtAl2001}. Differences in the magnitude of connected
quartic structure between saturators in the same sector can therefore
first appear at \(N\geq 4\).

\begin{figure}[t]
    \centering
    \includegraphics[width=\columnwidth]
    {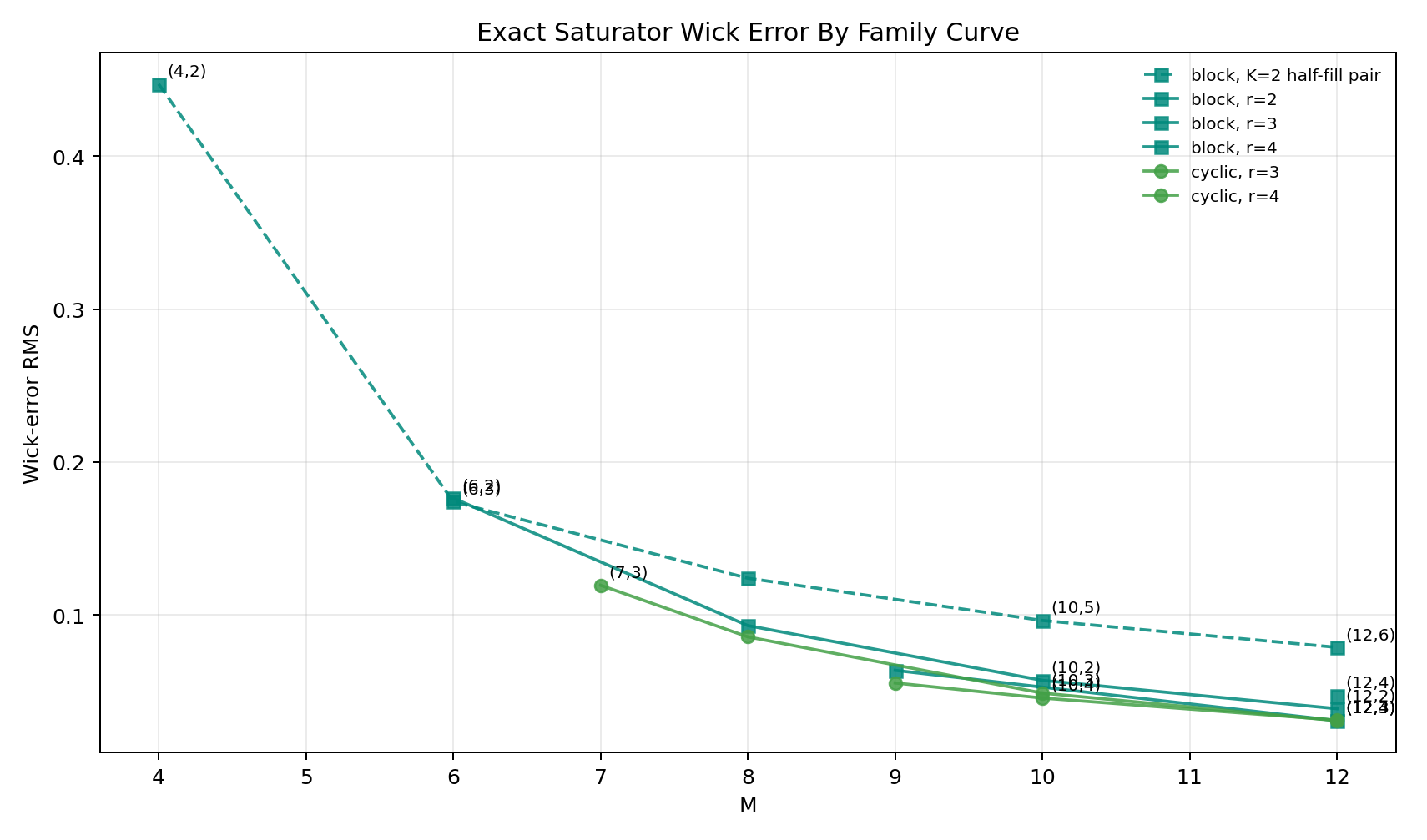}
    \caption{Exact four-point Wick-error RMS for block
and cyclic saturators versus \(M\), grouped by family. All states shown have
flat 1RDMs and maximal first antiflatness; the one-body data do not fix the
connected quartic tensor, although for \(N\leq 3\) its Frobenius norm is
fixed within each sector.}
    \label{fig:wick-family-rms}
\end{figure}

Four-point Wick RMS should not be interpreted as a complete distance from
Gaussianity. A fermionic Gaussian state factorizes at every even order,
while this diagnostic checks only the first connected correction beyond
covariance. Important non-Gaussian structure can be spread across many quadruples,
or can first appear at six-point and higher order. For the fixed-\(N\) hierarchy studied here, however, \(\Lambda\) is the
natural next object after \(\Gamma\). Having seen that flat one-body data
do not determine \(\Lambda\), we now ask whether this connected quartic
structure can ever vanish altogether. As we show next, the answer is no:
maximal one-body flatness at fixed particle number necessarily leaves a
nonzero connected four-point contribution.

\subsection{Four-point correlations are unavoidable}

The connected four-point structure of a maximally antiflat state cannot
vanish completely. To see this, regard \(D\) and \(\Lambda\) as matrices on
the antisymmetric two-particle space
\(\wedge^2\mathbb{C}^M\), indexed by pairs \(i<j\). Its dimension is
\begin{equation}
d_2=\binom{M}{2}.
\end{equation}

For a flat 1RDM,
\begin{equation}
\Gamma=pI_M,
\qquad
p=\frac{N}{M},
\end{equation}
so the Wick contribution on this pair space is simply
\begin{equation}
D^{\mathrm{Wick}}
=
p^2 I_{d_2}.
\end{equation}
Hence
\begin{equation}
\Lambda
=
D-p^2I_{d_2}.
\label{eq:lambda-flat}
\end{equation}

The trace of the connected 2RDM can in fact be related directly to the
first antiflatness for any fixed-\(N\) state.

\begin{proposition}[Unavoidable quartic correlations]
\label{prop:unavoidable-quartic}
For any fixed-\(N\) state,
\begin{equation}
\operatorname{Tr}\Lambda
=
-\frac{F_1}{8}.
\label{eq:lambda-trace-general}
\end{equation}
In particular, for every maximally antiflat pure state with
\(0<N<M\),
\begin{equation}
\operatorname{Tr}\Lambda
=
-\frac{N(M-N)}{2M}.
\label{eq:lambda-trace}
\end{equation}
Hence
\begin{equation}
\Lambda\neq0.
\end{equation}
Therefore every nontrivial maximally antiflat fixed-\(N\) state has a
nonzero connected four-point cumulant.
\end{proposition}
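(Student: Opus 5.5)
We will compute $\operatorname{Tr}\Lambda$ directly, viewing $D$ and $\Lambda$ as matrices on $\wedge^2\mathbb{C}^M$ with indices restricted to pairs $i<j$. The trace of $D$ follows from the standard contraction of the 2RDM. The trace of the Wick part is a simple one-body expression. Their difference should then reduce to Eq.~\eqref{eq:f1_gamma}. Concretely, the trace over ordered pairs $i<j$ equals half the unrestricted sum over $i,j$, because the diagonal terms $i=j$ vanish by antisymmetry. This gives
\begin{equation}
\operatorname{Tr}D=\sum_{i<j}D_{ij;ij}=\frac12\sum_{i,j}\langle a_i^\dagger a_j^\dagger a_j a_i\rangle .
\end{equation}
We will evaluate the unrestricted sum by normal ordering. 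The identity
\begin{equation}
\sum_{i,j}a_i^\dagger a_j^\dagger a_j a_i=\hat N^2-\hat N
\end{equation}
follows from $a_j^\dagger a_j a_i=a_i a_j^\dagger a_j-\delta_{ij}a_j$ after summing. In a fixed-$N$ state it yields $\operatorname{Tr}D=N(N-1)/2$.

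Next we treat the Wick part. Using $D^{\mathrm{Wick}}_{ij;ij}=\Gamma_{ii}\Gamma_{jj}-\Gamma_{ij}\Gamma_{ji}$, and again noting that the $i=j$ terms cancel, we obtain
\begin{equation}
\operatorname{Tr}D^{\mathrm{Wick}}=\frac12\Bigl[(\operatorname{Tr}\Gamma)^2-\operatorname{Tr}(\Gamma^2)\Bigr]=\frac12\bigl[N^2-\operatorname{Tr}(\Gamma^2)\bigr].
\end{equation}
Subtracting this from $\operatorname{Tr}D$ gives $\operatorname{Tr}\Lambda=-\tfrac12\bigl[N-\operatorname{Tr}(\Gamma^2)\bigr]$. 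Eq.~\eqref{eq:f1_gamma} then rewrites this as $-F_1/8$, which is Eq.~\eqref{eq:lambda-trace-general}.

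For the maximally antiflat case, the first theorem supplies $F_1=B_{M,N}=4N(M-N)/M$. Substituting gives $\operatorname{Tr}\Lambda=-N(M-N)/(2M)$. This value is strictly negative whenever $0<N<M$, so $\Lambda\neq 0$. As a consistency check, Eq.~\eqref{eq:lambda-flat} should give the same number: $\operatorname{Tr}D-p^2\binom{M}{2}$ with $p=N/M$ should equal $-N(M-N)/(2M)$, and we will verify this. The statement is vacuous in the excluded sectors $r=1$ and $r=2$ with odd $M$, since no maximally antiflat pure state exists there by Theorem~\ref{thm:existence-classification}.

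The only delicate step is bookkeeping of conventions. The paper defines $D_{ij;kl}=\langle a_i^\dagger a_j^\dagger a_l a_k\rangle$, so the diagonal elements are $\langle a_i^\dagger a_j^\dagger a_j a_i\rangle$, which equal $\langle n_i n_j\rangle$ for $i\neq j$. We must also check that the pair-space trace uses $i<j$ rather than all ordered pairs. Using all ordered pairs would double both traces and give $-F_1/4$ instead. We will fix the $i<j$ convention, which matches $d_2=\binom{M}{2}$ and $D^{\mathrm{Wick}}=p^2I_{d_2}$, and then verify the normal-ordering identity carefully.
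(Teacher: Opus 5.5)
Your proposal is correct and follows the paper's proof essentially step for step: you obtain $\operatorname{Tr}D=\binom{N}{2}$ from $\sum_{i,j}a_i^\dagger a_j^\dagger a_j a_i=\hat N^2-\hat N$, and $\operatorname{Tr}D^{\mathrm{Wick}}=\tfrac12[N^2-\operatorname{Tr}(\Gamma^2)]$; you then subtract and invoke $F_1=4(N-\operatorname{Tr}\Gamma^2)$ and $F_1=B_{M,N}$. Your explicit normal-ordering derivation, the consistency check via $\Lambda=D-p^2I_{d_2}$, and the remark on the $i<j$ convention are all valid details that the paper leaves implicit.
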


\begin{proof}
The trace of the 2RDM on the antisymmetric pair space is
\begin{equation}
\operatorname{Tr}D
=
\sum_{i<j}
\langle n_i n_j\rangle.
\end{equation}
Using
\begin{equation}
\sum_{i<j}n_i n_j
=
\frac{1}{2}
\left(
\hat N^2-\hat N
\right),
\end{equation}
a state with definite particle number satisfies
\begin{equation}
\operatorname{Tr}D
=
\binom{N}{2}.
\end{equation}

For the Wick contribution,
\begin{align}
\operatorname{Tr}D^{\mathrm{Wick}}
&=
\sum_{i<j}
\left(
\Gamma_{ii}\Gamma_{jj}
-
|\Gamma_{ij}|^2
\right)
\nonumber\\
&=
\frac{1}{2}
\left[
(\operatorname{Tr}\Gamma)^2
-
\operatorname{Tr}(\Gamma^2)
\right].
\end{align}
Since \(\operatorname{Tr}\Gamma=N\),
\begin{align}
\operatorname{Tr}\Lambda
&=
\binom{N}{2}
-
\frac{1}{2}
\left[
N^2-\operatorname{Tr}(\Gamma^2)
\right]
\nonumber\\
&=
\frac{1}{2}
\left[
\operatorname{Tr}(\Gamma^2)-N
\right].
\end{align}
Using
\begin{equation}
F_1
=
4\left(
N-\operatorname{Tr}(\Gamma^2)
\right)
\end{equation}
gives Eq.~\eqref{eq:lambda-trace-general}.

For a maximally antiflat state,
\begin{equation}
F_1
=
B_{M,N}
=
\frac{4N(M-N)}{M},
\end{equation}
and Eq.~\eqref{eq:lambda-trace-general} therefore gives
Eq.~\eqref{eq:lambda-trace}. For \(0<N<M\), this quantity is nonzero.
\end{proof}

The result is stronger than the statement that different maximally
antiflat states \emph{may} differ at fourth order. Connected quartic
correlations are required by fixed particle number itself once the 1RDM is
made completely flat. More generally, Eq.~\eqref{eq:lambda-trace-general}
shows that first antiflatness directly fixes the trace of the connected
quartic cumulant.

A quantitative lower bound follows immediately.

\begin{corollary}[Universal quartic lower bound]
\label{cor:lambda-lower-bound}
Every maximally antiflat state with \(0<N<M\) satisfies
\begin{equation}
\|\Lambda\|_F^2
\geq
\frac{N^2(M-N)^2}
{2M^3(M-1)}.
\label{eq:lambda-lower-bound}
\end{equation}
Equality holds only when the 2RDM is proportional to the identity on
\(\wedge^2\mathbb{C}^M\):
\begin{equation}
D
=
\frac{\binom{N}{2}}
{\binom{M}{2}}
I_{d_2}.
\label{eq:maximally-mixed-2rdm}
\end{equation}
\end{corollary}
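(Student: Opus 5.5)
Since the corollary concerns only the Frobenius norm of \(\Lambda\) on the pair space \(\wedge^2\mathbb{C}^M\), the plan is to combine the trace identity of Proposition~\ref{prop:unavoidable-quartic} with the Cauchy--Schwarz inequality for the Hilbert--Schmidt inner product. Regard \(\Lambda\) as a Hermitian operator on the \(d_2=\binom{M}{2}\)-dimensional pair space; Hermiticity follows from that of \(D\) and of \(D^{\mathrm{Wick}}\). Then
\begin{equation}
\abs{\Tr\Lambda}^2=\abs{\langle I_{d_2},\Lambda\rangle_{HS}}^2\leq \norm{I_{d_2}}_F^2\,\norm{\Lambda}_F^2=d_2\,\norm{\Lambda}_F^2 .
\end{equation}

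First I will insert Eq.~\eqref{eq:lambda-trace}, valid for every maximally antiflat state with \(0<N<M\), which gives \(\abs{\Tr\Lambda}^2=N^2(M-N)^2/(4M^2)\). Dividing by \(d_2=M(M-1)/2\) then yields exactly
\begin{equation}
\norm{\Lambda}_F^2\geq \frac{N^2(M-N)^2}{4M^2}\cdot\frac{2}{M(M-1)}=\frac{N^2(M-N)^2}{2M^3(M-1)} ,
\end{equation}
which is Eq.~\eqref{eq:lambda-lower-bound}. An equivalent route that makes the equality case transparent is to split \(\Lambda\) orthogonally into its traceless part and the scalar part \((\Tr\Lambda/d_2)I_{d_2}\). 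The norm then decomposes as \(\norm{\Lambda}_F^2=\norm{\Lambda-(\Tr\Lambda/d_2)I}_F^2+(\Tr\Lambda)^2/d_2\), and the bound follows because the first term is nonnegative.

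For the equality statement, equality in Cauchy--Schwarz holds iff \(\Lambda\) is proportional to \(I_{d_2}\). Since \(\Lambda=D-p^2I_{d_2}\) by Eq.~\eqref{eq:lambda-flat}, this is equivalent to \(D\) being proportional to the identity. The constant is then fixed by \(\Tr D=\binom{N}{2}\), established in the proof of Proposition~\ref{prop:unavoidable-quartic}, which gives Eq.~\eqref{eq:maximally-mixed-2rdm}. As a consistency check, I will verify that \(\binom{N}{2}/\binom{M}{2}-p^2\) equals \(\Tr\Lambda/d_2\) computed from Eq.~\eqref{eq:lambda-trace}.

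There is no real obstacle here. The only point requiring care is convention: the trace and norm must be taken over ordered pairs \(i<j\), so that the dimension is \(d_2\) rather than \(M^2\). Summing over all index quadruples would rescale both \(\Tr\Lambda\) and \(\norm{\Lambda}_F^2\) and change the constant. I will therefore state explicitly that \(\norm{\Lambda}_F\) is the Frobenius norm of the operator on \(\wedge^2\mathbb{C}^M\), consistent with how the trace was computed in Proposition~\ref{prop:unavoidable-quartic}. Whether equality is actually attained is a separate representability question that the corollary does not assert.
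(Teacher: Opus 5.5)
Your proposal is correct and follows essentially the same route as the paper: Cauchy--Schwarz \(\abs{\Tr\Lambda}^2\le d_2\norm{\Lambda}_F^2\) on \(\wedge^2\mathbb{C}^M\), then substituting \(\Tr\Lambda=-N(M-N)/(2M)\), and obtaining the equality case from \(\Lambda=D-p^2I_{d_2}\). Your step fixing the constant through \(\Tr D=\binom{N}{2}\), together with the consistency check, just makes explicit what the paper leaves implicit.
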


\begin{proof}
For any Hermitian \(d_2\times d_2\) matrix,
\begin{equation}
|\operatorname{Tr}\Lambda|^2
\leq
d_2\|\Lambda\|_F^2.
\end{equation}
Substituting Eq.~\eqref{eq:lambda-trace} and
\(d_2=\binom{M}{2}\) gives
Eq.~\eqref{eq:lambda-lower-bound}. Equality in the matrix
Cauchy--Schwarz inequality requires \(\Lambda\) to be proportional to the
identity, which together with Eq.~\eqref{eq:lambda-flat} gives
Eq.~\eqref{eq:maximally-mixed-2rdm}.
\end{proof}

The equality condition corresponds to a maximally mixed normalized 2RDM.
Such higher-body maximally mixed fermionic RDMs have been studied in the
context of many-body fermionic entanglement and combinatorial designs
\cite{GiorgadzeEtAl2025}. Their existence is a stronger requirement than
flatness of the 1RDM and is not guaranteed for every \((M,N)\).

\subsection{A second extremal problem}
\label{subsec:second-extremal}

The first-antiflatness problem determines which pure states can make the
one-body data completely flat. It does not, however, distinguish between
the different many-body states that realize this condition. We therefore
consider the set
\begin{equation}
\mathcal{S}_{M,N}^{\mathrm{flat}}
=
\left\{
|\psi\rangle:
\Gamma_\psi=\frac{N}{M}I_M
\right\}.
\label{eq:flat-manifold}
\end{equation}
All states in this set have the same maximal first antiflatness, while their
connected two-body correlations may differ.

The Frobenius norm of the two-particle cumulant has previously been used
as a measure of fermionic correlation and entanglement
\cite{JuhaszMazziotti2006}. We therefore define
\begin{equation}
Q_4(\psi)
=
\|\Lambda(\psi)\|_F.
\label{eq:q4}
\end{equation}
This leads to a second extremal problem,
\begin{align}
Q_4^{\min}(M,N)
&=
\min_{\psi\in\mathcal{S}_{M,N}^{\mathrm{flat}}}
Q_4(\psi),
\\
Q_4^{\max}(M,N)
&=
\max_{\psi\in\mathcal{S}_{M,N}^{\mathrm{flat}}}
Q_4(\psi).
\label{eq:q4-extrema}
\end{align}
Unlike the first-antiflatness optimization, these extrema are not fixed by
the 1RDM alone.

The problem has a useful reduced density matrix interpretation.

\begin{proposition}[Quartic norm and 2RDM purity]
\label{prop:q4-purity}
For every state in
\(\mathcal{S}_{M,N}^{\mathrm{flat}}\),
\begin{equation}
\|\Lambda\|_F^2
=
\operatorname{Tr}(D^2)
-
2\left(\frac{N}{M}\right)^2
\binom{N}{2}
+
\left(\frac{N}{M}\right)^4
\binom{M}{2}.
\label{eq:lambda-purity}
\end{equation}
Consequently, at fixed \((M,N)\), extremizing \(Q_4\) is equivalent to
extremizing the purity of the normalized two-particle reduced state.
\end{proposition}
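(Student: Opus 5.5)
The plan is to expand the Frobenius norm of \(\Lambda\) using Eq.~\eqref{eq:lambda-flat}, which holds on the flat manifold, \(\Lambda=D-p^2I_{d_2}\) with \(p=N/M\). Here \(D\) and \(\Lambda\) are regarded as Hermitian matrices on \(\wedge^2\mathbb{C}^M\), indexed by ordered pairs \(i<j\). The Frobenius norm is then
\begin{equation}
\|\Lambda\|_F^2=\operatorname{Tr}\!\left[(D-p^2I_{d_2})^2\right]
=\operatorname{Tr}(D^2)-2p^2\operatorname{Tr}D+p^4 d_2 .
\end{equation}
This expansion uses only that \(D\) is Hermitian and that \(p^2I_{d_2}\) commutes with it.

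Next I substitute the two traces. The proof of Proposition~\ref{prop:unavoidable-quartic} already gives \(\operatorname{Tr}D=\binom{N}{2}\) for any fixed-\(N\) state, and by definition \(d_2=\binom{M}{2}\). Inserting these yields exactly Eq.~\eqref{eq:lambda-purity}. The one point to check is that the Wick reduction \(D^{\mathrm{Wick}}=p^2I_{d_2}\) is consistent with the pair-space indexing \(i<j\). For \(i<j\) and \(k<l\) we have \(\Gamma_{ik}\Gamma_{jl}-\Gamma_{il}\Gamma_{jk}=p^2(\delta_{ik}\delta_{jl}-\delta_{il}\delta_{jk})\), and the second term vanishes because \(i<j\), \(k<l\) and \(i=l\), \(j=k\) would force \(i<j=k<l=i\). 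So the restriction to the pair space is indeed \(p^2\) times the identity. I would state this convention explicitly so that the Frobenius norm is understood on \(\wedge^2\mathbb{C}^M\), not on the full tensor \(D_{ij;kl}\), which would differ by a factor of \(4\).

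For the consequence, I introduce the normalized two-particle reduced state \(\rho_2=D/\binom{N}{2}\). This is a positive unit-trace operator on \(\wedge^2\mathbb{C}^M\) because \(D\ge0\) as a Gram matrix of the vectors \(a_la_k|\psi\rangle\). Then \(\operatorname{Tr}(D^2)=\binom{N}{2}^2\operatorname{Tr}(\rho_2^2)\). The remaining terms in Eq.~\eqref{eq:lambda-purity} depend only on \((M,N)\), so at fixed \((M,N)\) the quantity \(Q_4^2\) is an increasing affine function of the purity \(\operatorname{Tr}\rho_2^2\). Since \(Q_4\ge0\), minimizing or maximizing \(Q_4\) over \(\mathcal{S}^{\mathrm{flat}}_{M,N}\) is the same as minimizing or maximizing the 2RDM purity.

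The only real obstacle is bookkeeping of conventions: whether the trace and Frobenius norm are taken over ordered pairs or over all index quadruples, and verifying the sign and ordering in \(D_{ij;kl}=\langle a_i^\dagger a_j^\dagger a_la_k\rangle\). Once those are fixed consistently with the proof of Proposition~\ref{prop:unavoidable-quartic}, the argument is a direct algebraic expansion.
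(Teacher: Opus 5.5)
Your proposal is correct and follows the paper's proof: you expand $\operatorname{Tr}[(D-p^2I_{d_2})^2]$, substitute $\operatorname{Tr}D=\binom{N}{2}$ and $d_2=\binom{M}{2}$, and obtain the purity consequence from $\rho_2=D/\binom{N}{2}$, exactly as the paper does. Your explicit checks that $D^{\mathrm{Wick}}=p^2I_{d_2}$ in the $i<j$ pair basis and that $D\geq 0$ are sound details the paper leaves implicit.
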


\begin{proof}
For a flat 1RDM,
\begin{equation}
\Lambda
=
D-p^2 I_{d_2},
\qquad
p=\frac{N}{M},
\qquad
d_2=\binom{M}{2}.
\end{equation}
Therefore,
\begin{align}
\|\Lambda\|_F^2
&=
\operatorname{Tr}
\left[
(D-p^2I_{d_2})^2
\right]
\nonumber\\
&=
\operatorname{Tr}(D^2)
-
2p^2\operatorname{Tr}D
+
p^4d_2.
\end{align}
Using
\begin{equation}
\operatorname{Tr}D=\binom{N}{2}
\end{equation}
gives Eq.~\eqref{eq:lambda-purity}.
\end{proof}

To interpret this result, define the normalized 2RDM
\begin{equation}
\rho_2
=
\frac{D}{\binom{N}{2}},
\qquad
\operatorname{Tr}\rho_2=1.
\label{eq:normalized-2rdm}
\end{equation}
If
\begin{equation}
\rho_2
=
\sum_a
\lambda_a
|g_a\rangle\langle g_a|,
\end{equation}
then
\begin{equation}
P_2
\equiv
\operatorname{Tr}(\rho_2^2)
=
\sum_a\lambda_a^2,
\label{eq:2rdm-purity}
\end{equation}
and
\begin{equation}
\operatorname{Tr}(D^2)
=
\binom{N}{2}^{\,2}P_2.
\end{equation}

The eigenvectors \(|g_a\rangle\) represent orthogonal two-fermion modes.
A small \(P_2\) means that the two-particle weight is distributed over many
such modes, whereas a large \(P_2\) means that it is concentrated in fewer
two-particle configurations. Thus, within the flat-1RDM manifold, \(Q_4\)
measures how strongly the remaining two-body structure is concentrated once
all one-body information has already been made uniform.

Substituting Eq.~\eqref{eq:2rdm-purity} into
Eq.~\eqref{eq:lambda-purity} gives
\begin{equation}
Q_4^2
=
\binom{N}{2}^{\,2}P_2
-
2\left(\frac{N}{M}\right)^2
\binom{N}{2}
+
\left(\frac{N}{M}\right)^4
\binom{M}{2}.
\label{eq:q4-p2}
\end{equation}
At fixed \((M,N)\), all terms except \(P_2\) are constants. Therefore
\begin{equation}
Q_4^{\min}
\longleftrightarrow
P_2^{\min},
\qquad
Q_4^{\max}
\longleftrightarrow
P_2^{\max}.
\label{eq:q4-purity-extrema}
\end{equation}

The general density-matrix bounds
\begin{equation}
\frac{1}{\binom{M}{2}}
\leq
P_2
\leq
1
\label{eq:p2-kinematic-bounds}
\end{equation}
give corresponding constraints on the possible quartic structure. The
lower equality requires \(\rho_2\) to be completely mixed on the
two-fermion space,
\begin{equation}
\rho_2
=
\frac{1}{\binom{M}{2}}I_{d_2},
\end{equation}
while the upper equality would require a pure two-particle reduced state.

These are kinematic bounds on a general two-particle density matrix. They
need not be attainable by a pure \(N\)-fermion state that also satisfies
\(\Gamma=(N/M)I_M\). Determining the actual values of
\(P_2^{\min}\) and \(P_2^{\max}\), and identifying the states that attain
them, is therefore a new constrained representability problem within
\(\mathcal{S}_{M,N}^{\mathrm{flat}}\).

\subsection{Bounds and exactly solvable sectors}
\label{subsec:quartic-bounds}

Before addressing the general quartic extremal problem, it is useful to
examine the smallest particle-number sectors. These cases serve primarily
as consistency checks: the purity of the 2RDM is fixed by elementary
properties of the reduced state, so no further optimization is required.
They also identify where the second extremal problem first becomes
nontrivial.

For \(N=2\), the normalized 2RDM is the full pure state,
\begin{equation}
\rho_2 = |\psi\rangle\langle\psi|,
\end{equation}
and hence
\begin{equation}
P_2
\equiv
\operatorname{Tr}(\rho_2^2)
=
1.
\label{eq:N2-purity}
\end{equation}
Therefore, whenever the flat-1RDM manifold is nonempty,
\begin{equation}
P_{2,\mathrm{flat}}^{\min}
=
P_{2,\mathrm{flat}}^{\max}
=
1.
\end{equation}
The existence qualification is important. From the classification of
Sec.~\ref{sec:saturators}, a flat pure two-fermion state exists only for
even \(M\), with \(M=2\) corresponding to the trivial fully occupied
sector. Substitution into Eq.~\eqref{eq:q4-p2} gives
\begin{equation}
Q_4^2
=
1-\frac{8}{M^3}.
\label{eq:N2-q4}
\end{equation}
Thus, for \(N=2\), different flat-1RDM states may have different quartic
tensors, but their Frobenius magnitude cannot distinguish them.

The \(N=3\) sector is rigid for a different reason. For a pure
three-fermion state, the normalized one- and two-particle reduced states
are complementary reductions of the same pure state and therefore have
the same nonzero spectrum. If the 1RDM is flat,
\begin{equation}
\rho_1
=
\frac{\Gamma}{3}
=
\frac{1}{M}I_M,
\end{equation}
so the nonzero eigenvalues of \(\rho_2\) are also all equal to \(1/M\).
Consequently,
\begin{equation}
P_2
=
\operatorname{Tr}(\rho_2^2)
=
\frac{1}{M},
\qquad
\operatorname{Tr}(D^2)
=
\frac{9}{M}.
\label{eq:N3-purity}
\end{equation}
Whenever the flat manifold exists, the purity is therefore again fixed,
and Eq.~\eqref{eq:lambda-purity} gives
\begin{equation}
Q_4^2
=
\frac{9(2M+3)(M-3)}{2M^3}.
\label{eq:N3-q4}
\end{equation}
According to the existence classification above,
\(\mathcal{S}_{M,3}^{\mathrm{flat}}\) is nonempty for \(M=3\) and for
\(M\geq6\); the case \(M=3\) is again the trivial fully occupied sector.

The \(N=2\) and \(N=3\) results should therefore be viewed as boundary
checks rather than solutions of the general quartic optimization.
Starting at \(N=4\), the spectrum of the 2RDM is no longer determined by
the flatness of the 1RDM through these elementary arguments, and the
extremal problem becomes genuinely nontrivial.

We next consider what can be said in arbitrary particle-number sectors.
Since \(\rho_2\) is a normalized density matrix on the two-fermion space
of dimension
\begin{equation}
d_2=\binom{M}{2},
\end{equation}
its purity satisfies
\begin{equation}
P_2
=
\operatorname{Tr}(\rho_2^2)
\geq
\frac{1}{d_2}
=
\frac{1}{\binom{M}{2}}.
\label{eq:p2-lower}
\end{equation}
Equality holds if and only if the two-particle reduced state is maximally
mixed,
\begin{equation}
\rho_2
=
\frac{1}{\binom{M}{2}}I_{d_2},
\label{eq:maxmixed-2rdm}
\end{equation}
or equivalently,
\begin{equation}
D
=
\frac{\binom{N}{2}}{\binom{M}{2}}I_{d_2}.
\label{eq:maxmixed-D}
\end{equation}

Through Eq.~\eqref{eq:q4-p2}, this minimum-purity condition is precisely
the equality condition of the universal quartic lower bound established
in Corollary~\ref{cor:lambda-lower-bound}. The purity formulation is useful
here because it characterizes the structure required to reach that lower
edge: the normalized 2RDM must be maximally mixed.

Importantly, the condition in Eq.~\eqref{eq:maxmixed-2rdm} is already
strong enough to imply a flat 1RDM. Indeed, contracting the 2RDM gives
\begin{equation}
\sum_j D_{ij;kj}
=
(N-1)\Gamma_{ik},
\end{equation}
and Eq.~\eqref{eq:maxmixed-D} therefore yields
\begin{equation}
\Gamma
=
\frac{N}{M}I_M.
\end{equation}
Hence a pure state with a maximally mixed 2RDM automatically belongs to
\(\mathcal{S}_{M,N}^{\mathrm{flat}}\).

The lower-edge problem can therefore be formulated as an existence
question:
\begin{equation}
P_{2,\mathrm{flat}}^{\min}(M,N)
=
\frac{1}{\binom{M}{2}}
\quad\Longleftrightarrow\quad
\exists\,|\psi\rangle
\ \text{with}\
\rho_2=
\frac{1}{\binom{M}{2}}I_{d_2}.
\label{eq:lower-existence}
\end{equation}
If such a maximally mixed 2RDM is realized, the 1RDM is automatically
flat, as shown above. The remaining issue is therefore purely one of
representability: for which pairs \((M,N)\) can a pure \(N\)-fermion
state realize this equality condition? Such states are closely related
to fermionic states with maximally mixed few-particle reduced density
matrices \cite{GiorgadzeEtAl2025}. This gives a natural second-level
analogue of the existence problem encountered for first antiflatness.

The upper edge is more subtle. The general density-matrix bound
\(P_2\leq 1\) does not incorporate fermionic \(N\)-representability and
is therefore not expected to be sharp for \(N>2\). A stronger constraint
follows from a Hilbert--Schmidt estimate proved by Christiansen for the
connected, or truncated, fermionic two-body correlation operator
\cite{Christiansen2024}.

To express this result in the notation used here, recall that
\(\Lambda\) is the connected part of the 2RDM,
\begin{equation}
\Lambda
=
D-D^{\mathrm{Wick}}.
\end{equation}
Christiansen formulates the corresponding connected two-body operator on
the full tensor-product two-particle space, whereas throughout this work
we represent \(D\) and \(\Lambda\) in the normalized antisymmetric-pair
basis \(\{|ij\rangle_A:i<j\}\). With this normalization, the matrix
elements of the truncated two-body operator are twice those of
\(\Lambda\). Consequently, Christiansen's Hilbert--Schmidt estimate,
written directly in our convention, becomes
\begin{equation}
2\|\Lambda\|_F
\leq
\sqrt{
5N\,
\operatorname{Tr}
\left[
\Gamma(1-\Gamma)
\right]
}.
\label{eq:christiansen-lambda-bound}
\end{equation}
Here the Hilbert--Schmidt norm appearing in
Ref.~\cite{Christiansen2024} is identical to the Frobenius norm used
throughout this work for finite-dimensional operators.

The one-body quantity on the right-hand side is directly related to the
first antiflatness. From Eq.~\eqref{eq:f1_gamma},
\begin{align}
\operatorname{Tr}
\left[
\Gamma(1-\Gamma)
\right]
&=
\operatorname{Tr}\Gamma-\operatorname{Tr}(\Gamma^2)
\nonumber\\
&=
N-\operatorname{Tr}(\Gamma^2)
\nonumber\\
&=
\frac{F_1}{4}.
\label{eq:f1-trace-relation}
\end{align}
Substituting this into
Eq.~\eqref{eq:christiansen-lambda-bound} gives
\begin{equation}
2\|\Lambda\|_F
\leq
\frac{1}{2}\sqrt{5NF_1}.
\end{equation}
Since \(Q_4=\|\Lambda\|_F\), we obtain
\begin{equation}
Q_4
\leq
\frac{1}{4}\sqrt{5NF_1},
\end{equation}
or equivalently,
\begin{equation}
{
Q_4^2
\leq
\frac{5NF_1}{16}
}.
\label{eq:q4-upper-general}
\end{equation}

Thus, although \(F_1\) depends only on one-body information, it places an upper bound on the magnitude of the connected two-body
correlations. In particular, on the maximally antiflat manifold,
\begin{equation}
F_1
=
B_{M,N}
=
\frac{4N(M-N)}{M},
\end{equation}
and therefore
\begin{equation}
{
Q_4^2
\leq
\frac{5N^2(M-N)}{4M}
}.
\label{eq:q4-upper-flat}
\end{equation}

\section{Conclusion and outlook}

We have characterized the extremal structure of first fermionic
antiflatness for pure states at fixed particle number and used it as the
starting point for a hierarchy of higher-order fermionic correlations.
For a pure state of \(N\) fermions in \(M\) modes, particle-number
conservation reduces all quadratic information to the 1RDM \(\Gamma\),
and the first antiflatness becomes
\begin{equation}
F_1
=
4\left[
N-\operatorname{Tr}(\Gamma^2)
\right].
\end{equation}
This immediately gives the sharp bound
\begin{equation}
F_1
\leq
B_{M,N}
=
\frac{4N(M-N)}{M},
\end{equation}
with equality if and only if
\begin{equation}
\Gamma=\frac{N}{M}I_M.
\end{equation}
Thus maximal first antiflatness is exactly equivalent to complete
one-body flatness. The same flat natural-occupation spectrum also
maximizes the relative-entropy distance from the Gaussian set at fixed
\((M,N)\), whenever it is realizable by a pure state.

We then solved the corresponding pure-state existence problem. Writing
\(r=\min(N,M-N)\), a flat 1RDM is realizable precisely in the trivial
\(r=0\) sectors, in the \(r=2\) sectors with even \(M\), and in every
sector with \(r\geq3\). The remaining cases, \(r=1\) and \(r=2\) with
odd \(M\), are excluded by pure-state rank and pairing constraints. The
allowed sectors are constructive: block states cover the divisible and
half-filled cases, while cyclic-orbit states provide explicit saturators
in the remaining generic sectors. Consequently, the fixed-\(N\)
maximal-antiflatness problem is solved both at the level of its extremal
value and at the level of the pure states that realize it.

The covariance-level solution does not, however, determine the full
many-body correlation structure. Flat 1RDMs do not fix the connected four-point tensor, and the block and cyclic families provide explicit saturators on which this structure can be evaluated. In a fixed-particle-number
sector, we showed that the connected four-Majorana Wick-error tensor and
the connected 2RDM cumulant \(\Lambda\) encode the same quartic
information. This identifies
\begin{equation}
\Gamma
\longrightarrow
\Lambda
\longrightarrow
\text{higher connected cumulants}
\end{equation}
as a natural correlation hierarchy beyond covariance.

The first step of this hierarchy is already strongly constrained. For any
fixed-\(N\) state, the trace of the connected 2RDM is tied directly to
first antiflatness through
\begin{equation}
\operatorname{Tr}\Lambda=-\frac{F_1}{8}.
\end{equation}
Consequently, connected quartic correlations cannot vanish in any
nontrivial maximally antiflat state. Their minimum magnitude is associated
with a maximally mixed normalized 2RDM, so attainment of the lower edge
becomes a new pure-state representability problem. At fixed \((M,N)\),
we further showed that extremizing
\[
Q_4=\|\Lambda\|_F
\]
is equivalent to extremizing the purity of the normalized 2RDM. The
one-body extremal problem is therefore followed naturally by a distinct
two-body extremal problem.

A central result of the quartic analysis is that this second level is not
arbitrarily large either. By converting Christiansen's Hilbert--Schmidt
estimate for the truncated two-body operator
\cite{Christiansen2024} to the connected-2RDM convention used here, we
obtain
\begin{equation}
Q_4^2
\leq
\frac{5NF_1}{16}.
\label{eq:conclusion-q4-upper}
\end{equation}
For a maximally antiflat state this yields
\begin{equation}
Q_4^2
\leq
\frac{5N^2(M-N)}{4M}.
\label{eq:conclusion-q4-upper-flat}
\end{equation}
This provides a direct quantitative bridge between the covariance-level
resource measured by first antiflatness and the magnitude of connected
two-body correlations that remain beyond covariance. Although the bound
is inherited from a general fermionic estimate and need not be sharp on
the flat-1RDM manifold, it establishes a upper envelope for the
next level of the correlation hierarchy.

Taken together, these results sharpen the interpretation of fermionic
antiflatness. First antiflatness is not a complete measure of the
many-body departure from Gaussianity: states can agree exactly at the
covariance level while differing in their connected
four-point structure. Nevertheless, the one-body and two-body levels are
not independent. Fixed particle number enforces a nonzero quartic
contribution, while the antiflatness itself constrains how large that
contribution can become. The resulting picture is therefore not simply a
sequence of unrelated diagnostics, but a constrained hierarchy in which
higher-order correlations contain genuinely new information while
remaining linked to the lower-order data.

Several concrete problems remain open. At the lower edge, one must
determine for which \((M,N)\) a pure fermionic state can realize a
maximally mixed 2RDM and hence saturate the universal quartic minimum. At
the upper edge, the bound in
Eq.~\eqref{eq:conclusion-q4-upper-flat} is general rather than tailored
to the flat-1RDM manifold. Determining the sharp value of
\(Q_4^{\max}(M,N)\), identifying its equality conditions, and
establishing whether additional pure-state \(N\)-representability
constraints can strengthen the present upper bound are natural next
questions. It would also be useful to determine where the block and
cyclic saturators lie within these quartic extrema and whether either
family is extremal in distinguished sectors.

Beyond fourth order, the same program can be continued with connected
six-point and higher cumulants. This would produce a sequence of
constrained extremal problems in which the information invisible at one
correlation order becomes the object of study at the next. Such a
hierarchy may provide a systematic framework for separating efficiently
accessible covariance information from progressively higher-order
non-Gaussian resources in interacting fermionic systems, while retaining constraints between the different levels.

\section*{Author contributions}

R.G.: Conceptualization, methodology, formal analysis, writing initial draft. Formulated the original research question, derived the sharp fixed-particle-number antiflatness bound, and developed the construction and existence classification of maximally antiflat states.

A.H.: Methodology, formal analysis, writing -- review and editing.  Substantially rewrote and organized the manuscript, developed the higher-order correlation analysis in Sec.~IV, and derived the associated lower and upper bounds and the correlation-hierarchy formulation.

J.D.W.: Conceptualization, supervision and guidance, and manuscript writing.

\section*{Conflicts of interest}
There are no conflicts to declare.

\section*{Data availability}
The data and code supporting the findings of this study are available at
\url{https://github.com/Sciotaku/FAFresearch}.

\section*{Acknowledgments}
The authors thank members of the Whitfield group for helpful discussions. 
Our work is supported by the Army Research Office under award W911NF2410043 as well as the Department of Energy under award DE-SC0019374.  JDW holds concurrent appointments
at Dartmouth College and as an Amazon Visiting Academic.
This paper describes work performed at Dartmouth College
and is not associated with Amazon.

\nocite{Mazziotti1998,KutzelniggMukherjee1999,JuhaszMazziotti2006,
GigenaRossignoli2015,WalterDoranGrossChristandl2013,
AltunbulakKlyachko2008}
\bibliographystyle{apsrev4-2}
\bibliography{fermionic_antiflatness_refs}

\end{document}